\documentclass[12pt]{article}

\usepackage{amssymb,amsmath,amsfonts}
\usepackage{eurosym}
\usepackage{geometry}
\usepackage[normalem]{ulem}
\usepackage{graphicx}
\usepackage{caption}
\usepackage{color}
\usepackage{setspace}
\usepackage{sectsty}
\usepackage{comment}
\usepackage{footmisc}
\usepackage{array}
\usepackage{booktabs}
\usepackage{natbib}
\usepackage{pdflscape}
\usepackage{subcaption}
\usepackage{hyperref}
\usepackage{tgpagella}
\usepackage{enumitem}
\usepackage{dirtytalk}
\usepackage{float}
\usepackage{multirow}
\usepackage{pifont}
\usepackage{fontawesome5}
\usepackage{tikz}
\usepackage{longtable}
\usepackage{algorithm}
\usepackage{algpseudocode}
\usepackage{threeparttable}
\usepackage{pgfplots}
\pgfplotsset{compat=1.18}
\usepackage{xcolor}

\newtheorem{proposition}{Proposition}
\newenvironment{proof}[1][Proof]{\noindent\textbf{#1.} }{\ \rule{0.5em}{0.5em}}

\newcolumntype{L}[1]{>{\raggedright\let\newline\\\arraybackslash\hspace{0pt}}m{#1}}
\newcolumntype{C}[1]{>{\centering\let\newline\\\arraybackslash\hspace{0pt}}m{#1}}
\newcolumntype{R}[1]{>{\raggedleft\let\newline\\\arraybackslash\hspace{0pt}}m{#1}}

\hypersetup{colorlinks=true,linkcolor=blue,citecolor=blue,urlcolor=blue}

\graphicspath{{figures/}{results/}{results/figures/}}

\newcommand{\IncludeGraphicsMaybe}[2][]{%
  \IfFileExists{#2}{%
    \includegraphics[#1]{#2}%
  }{%
    \IfFileExists{figures/#2}{%
      \includegraphics[#1]{figures/#2}%
    }{%
      \fbox{\parbox[c][0.25\textheight][c]{0.90\textwidth}{\centering
      \texttt{Missing figure: \detokenize{#2}}}}%
    }%
  }%
}

\begin{document}

\begin{titlepage}
    \title{Bayesian Parametric Portfolio Policies}
    \author{
              Miguel C. Herculano\thanks{
              Adam Smith Business School, University of Glasgow.
              E: \href{mailto:miguel.herculano@glasgow.ac.uk}{miguel.herculano@glasgow.ac.uk}.
              W: \href{https://mcherculano.github.io}{mcherculano.github.io}
              }
            }
    \date{This Version: February 2026}
    \maketitle

    \vspace{1em}
    \begin{center}
    \end{center}
    \vspace{1.5em}

    \begin{abstract}
    \noindent
    Parametric Portfolio Policies (PPP) estimate optimal portfolio weights directly as functions of observable signals by maximizing expected utility, bypassing the need to model asset returns and covariances. However, PPP ignores policy risk. We show that this is consequential, leading to an overstatement of expected utility and an understatement of portfolio risk. We develop Bayesian Parametric Portfolio Policies (BPPP), which place a prior on policy coefficients thereby correcting the decision rule. We derive a general result showing that the utility gap between PPP and BPPP is strictly positive and proportional to posterior parameter uncertainty and signal magnitude. Under a mean–variance approximation, this correction appears as an additional estimation-risk term in portfolio variance, implying that PPP overexposes when signals are strongest and when risk aversion is high. Empirically, in a high-dimensional setting with 242 signals and six factors over 1973–2023, BPPP delivers higher Sharpe ratios, substantially lower turnover, larger investor welfare, and lower tail risk, with advantages that increase monotonically in risk aversion and are strongest during crisis episodes.

    \bigskip
    \noindent \textbf{Keywords:} Portfolio choice; Bayesian decision theory; factor timing; estimation risk.\\
    \noindent \textbf{JEL Codes:} G11, C11, C52.
    \end{abstract}

    \setcounter{page}{0}
    \thispagestyle{empty}
\end{titlepage}

\pagebreak
\newpage
\doublespacing

\section{Introduction}
\label{sec:intro}

A central problem in asset pricing is to translate predictive information into portfolio decisions in a way that is both economically meaningful and empirically robust. Decades of research document that expected returns vary predictably with firm characteristics and macroeconomic signals \citep{ChenZimmermann2022,Gu2020,Haddad2020}, yet exploiting such predictability in real time remains challenging. Classic mean-variance approaches require modelling the joint conditional distribution of returns, a task that quickly becomes infeasible in high-dimensional settings and is known to produce unstable weights in finite samples \citep{DeMiguel2009,Green2017}. Parametric Portfolio Policies (PPP), introduced by \citet{Brandt2009}, offer an appealing solution. Rather than modelling returns, PPP parameterise portfolio weights directly as functions of observable predictors and estimate policy parameters by maximising average realised utility.

PPP rests on a strong implicit assumption that motivates this paper. Once the policy parameter $\theta$ is estimated, the investor behaves as if $\hat\theta$ were the true data-generating policy, ignoring the uncertainty surrounding such estimate. We show that this treatment of policy uncertainty is consequential, leading to a systematic overstatement of investor welfare and understatement of portfolio risk. Moreover, these effects are largest when risk aversion is high or signals are strong, exactly when PPP takes its most aggressive positions. We adopt a Bayesian perspective and treat the policy parameter $\theta$ as a random variable. Bayesian Parametric Portfolio Policies (BPPP) places a prior on $\theta$ and integrates expected utility over the resulting posterior, internalizing policy risk and producing more stable portfolio allocations. Crucially, this stability is not imposed through an \textit{ad hoc} penalty or constraint. It arises directly from rational investor behaviour under concave utility and uncertain parameters. In a mean-variance approximation, the BPPP correction adds an estimation-risk term to the variance of portfolio returns, shrinking tilts proportionally.

Several aspects of the framework are worth emphasising. First, BPPP nests PPP as the special case of a flat (uninformative) prior, in which the posterior collapses to a point mass at $\hat\theta$. Second, the framework requires no distributional assumptions on returns beyond those already embedded in the utility function. Third, from an empirical viewpoint BPPP accommodates an additional source of information for portfolio construction, via the prior which allows for the inclusion of investor's \textit{ex-ante} beliefs about how aggressively the policy should respond to signals.

We evaluate BPPP by using six Fama--French factor portfolios, 242 predictive signals (212 anomaly-based predictors from \citet{ChenZimmermann2022} and 30 factor-timing signals following \citet{Haddad2020}), and an expanding-window out-of-sample exercise from 1973M8 to 2023M12. This high-dimensional environment is precisely the setting where parameter uncertainty is large and the overstatement predicted is most severe. BPPP achieves a gross out-of-sample Sharpe ratio of 1.32, versus 1.05 for PPP and 0.74 for the market benchmark. Under CRRA utility with moderate risk aversion ($\gamma=5$), BPPP delivers a certainty-equivalent advantage of 536 basis points per annum relative to the market benchmark, compared with 353 basis points for PPP. The gap widens with risk aversion, at $\gamma=10$ the difference is about 255 basis points, consistent with the prediction that the overexposure correction matters most for risk averse investors. Turnover falls materially relative to PPP, and at 50 basis points of transaction costs BPPP retains a net Sharpe near 0.99 versus 0.60 for PPP. The BPPP advantage holds in four of the five complete decades of the sample and in both major crisis episodes (the Great Financial Crisis and Covid). The Bayesian approach also yields portfolios with lower tail risk. 

Our paper contributes to several strands of the literature. The classical Bayesian portfolio choice literature \citep{Kandel1996, Pastor2000,Barberis2000} shows that parameter uncertainty leads to more conservative allocations when modelled explicitly, but does so within the framework of return-distribution modelling. We operate entirely within the PPP framework and identify the specific channels that forces conservatism without any distributional assumptions on returns. Our posterior is a Gibbs posterior in the sense of \citet{Bissiri2016}, updated by utility rather than likelihood, and PPP emerges as its mode under a flat prior.

Within the PPP literature, \citet{Brandt2009} and related work by \citet{BrandtSantaClara2006} treat $\theta$ as known with certainty throughout. Our contribution is to relax that assumption within the parametric policy-rule framework and quantify the portfolio implications of policy-parameter uncertainty. This is complementary to broader Bayesian portfolio-choice work such as \citet{GarlappiUppalWang2007} and \citet{TuZhou2011} and others reviewed by \cite{AvramovZhou2010}, which operates in return-distribution or mean-variance allocation settings rather than PPP policy estimation. The result also complements work on portfolio regularisation \citep{DeMiguel2009,Koijen2018} by showing that shrinkage in PPP has a decision-theory interpretation rather than being only a statistical convenience. The paper also connects to the factor timing literature \citep{Haddad2020,Ilmanen2021,Arnott2019} and the high-dimensional asset pricing literature \citep{Green2017,Kozak2020,Gu2020}. While those papers focus on signal selection in predicting the Stochastic Discount Factor, we look at the most relevant signals in the context of portfolio decisions by risk averse investors. Our utility-based approach also aligns with the critique in \citet{Wang2026}, who argue that the conventional two-stage separation between return forecasting and portfolio optimisation is suboptimal. BPPP avoids that separation by design, directly optimising expected utility over the posterior distribution of policy parameters.  

The remainder of the paper is organized as follows. Section~\ref{sec:motivation} sets notation and motivates the exercise. Section~\ref{sec:bppp} develops Bayesian Parametric Portfolio Policies and derives the key proposition; a brief discussion of the estimation procedure is provided at the end of that section, with full details in Appendix~\ref{sec:estimation}. Section~\ref{sec:data} describes the data. Section~\ref{sec:results} presents empirical results. Section~\ref{sec:robustness} reports robustness checks. Section~\ref{sec:conclusion} concludes.

\section{Setup and Motivation}
\label{sec:motivation}

At each date $t$, an investor observes a vector of $L$ signals $z_t$ and chooses portfolio weights $w_t$ over $K$ assets. Portfolio returns realised at $t+1$ are $r_{p,t+1} = w_t'R_{t+1}$, where $R_{t+1}$ is the vector of gross returns. The investor has time-separable CRRA preferences and, abstracting from intertemporal hedging motives, solves at each date $t$
\begin{align}
  \max_{w_t \in \mathcal{W}} \; \mathbb{E}_t\!\left[U(r_{p,t+1})\right],
\end{align}
where $\mathcal{W}$ is the feasible set and $U(.)$ is the utility function. Solving this problem directly requires the conditional distribution of $R_{t+1}|z_t$, which is high-dimensional and difficult to estimate reliably. \citet{Brandt2009} resolve this difficulty by restricting attention to portfolio rules that are explicit functions of observable signals. Weights follow the linear policy rule
\begin{align}
  \label{policy}
  w_t(\theta) = w_b + \theta z_t,
\end{align}
where $w_b$ is a benchmark allocation (the market portfolio throughout) and
$\theta \in \mathbb{R}^{K \times L}$ is the policy parameter matrix, normalised to
portfolio-weight scale so that each entry $\theta_{k\ell}$ represents the weight response
of asset $k$ to signal $\ell$, and $\theta z_t$ is the signal-induced tilt away from
the benchmark.
In this framework, dynamic portfolio choice reduces to estimating $\theta$. Given observations $\mathcal{D}_T=\{(R_{t+1}, z_t)\}_{t=1}^T$, the PPP investor solves
\begin{equation}
  \label{ppp_obj}
  \hat{\theta} = \arg\max_{\theta} \frac{1}{T} \sum_{t=1}^T U\!\left(w_t(\theta)'R_{t+1}\right).
\end{equation}
Equation \ref{ppp_obj} is a sample analogue of expected utility maximisation over policies, and it has three attractive properties. It delivers state-dependent, implementable portfolio rules, it requires no explicit return distribution, and it scales to large signal spaces.

The formulation of Parametric Portfolio Policies (PPP) treats the policy parameter $\theta$ as an unknown but fixed object, estimated by maximising sample average utility. Once a point estimate $\hat{\theta}$ is obtained, portfolio decisions are taken as if this estimate were known with certainty. This plug-in approach is standard in the PPP literature, but it neglects estimation uncertainty in the policy parameters. From a decision-theory perspective, this omission is consequential. Policy parameters are estimated from a finite sample of noisy signals. When $L$ is large relative to $T$, $\hat\theta$ is estimated with substantial error. Small perturbations in $\hat\theta$ translate linearly into portfolio tilts and can produce large, unstable weight changes from period to period. Excessive turnover and unstable portfolio choices are an expected consequence of estimation risk of PPP. The deeper issue, however, is not merely a finite-sample nuisance. The investor knows that $\theta$ is uncertain, yet behaves as if it is not. We formalise the cost of this inconsistency and propose another approach in the next section.

\section{Bayesian Parametric Portfolio Policies}
\label{sec:bppp}

When $\theta$ is uncertain, rational portfolio choice requires integrating expected utility over this uncertainty rather than conditioning on a single estimated policy. We adopt a Bayesian perspective and treat the policy parameter $\theta$ as a random variable. Given a prior distribution $p(\theta)$ and the loss function implied by the PPP objective, the posterior distribution of $\theta$ is
\begin{align} \label{eq:posterior}
	p(\theta \mid \mathcal{D}_T)
	\;\propto\;
	\exp\!\left(
	\sum_{t=1}^T
	U\!\left(
	\pi(z_t;\theta)' R_{t+1}
	\right)
	\right)
	p(\theta).
\end{align}
Rather than updating beliefs through a conventional likelihood, which would require a parametric model for returns, we follow \citet{Bissiri2016} and update using the PPP objective itself as a loss function. Under this formulation, the PPP estimator corresponds to the posterior mode under a flat prior. In contrast, Bayesian Parametric Portfolio Policies (BPPP) explicitly account for the full posterior distribution of $\theta$.
Given the posterior distribution $p(\theta \mid \mathcal{D}_T)$, the Bayesian investor evaluates portfolio performance by integrating expected utility over parameter uncertainty. Formally, the investor's objective at time $t$ can be written as
\[
\mathbb{E}_{\theta \mid \mathcal{D}_T}
\left[
\mathbb{E}_t
\left[
U\!\left(
\pi(z_t;\theta)' R_{t+1}
\right)
\right]
\right],
\]

integrating expected utility over both future returns and residual uncertainty about the policy. Portfolio weights are constructed as
\begin{equation}
	\label{eq:bppp_weights}
	w_t^{\text{BPPP}} = \mathbb{E}_{\theta|\mathcal{D}_T}\!\left[w_t(\theta)\right]
	= w_b + \mathbb{E}[\theta|\mathcal{D}_T]\, z_\tau,
\end{equation}
where the last equality uses the linearity of $w_\tau(\theta)$ in $\theta$. In practice, this expectation is approximated via a Laplace approximation averaging around the MAP policy (See Section~\ref{sec:estimation} for further details). This construction highlights the key difference between PPP and BPPP. While PPP selects a single optimal policy and ignores estimation risk, BPPP averages over a distribution of plausible policies.

The Bayesian averaging implicit in BPPP has a clear economic interpretation which forms the basis of the central claim of this paper. Under nonlinear utility, expected utility is a concave function of portfolio returns. As a result, $U\!\left(\mathbb{E}[r_p]\right)\neq
\mathbb{E}\!\left[U(r_p)\right]$,
and ignoring parameter uncertainty can lead to overly aggressive portfolio choices. More precisely, under concave utility, expected utility is strictly less than the utility evaluated at expected returns. Parameter uncertainty therefore reduces expected utility even if it leaves expected returns unchanged. By integrating utility over the posterior distribution of policy parameters, BPPP internalizes estimation risk and produces more stable portfolio allocations. We formalize this result next. 

\begin{proposition}
	\label{prop:1}
	Define the conditional expected utility of policy $\theta$ at date $\tau$ as
	\[
	G_\tau(\theta) \;=\; \mathbb{E}_\tau\!\left[U\!\left(w_\tau(\theta)'R_{\tau+1}\right)\right],
	\]
	and let $m_\tau = \mathbb{E}[\theta|\mathcal{D}_{T_\tau}]$ denote the posterior mean, $\Sigma_{\theta,\tau} = \operatorname{Var}(\theta|\mathcal{D}_{T_\tau})$ the posterior covariance, $U$ a strictly concave and twice continuously differentiable utility function, where $\Sigma_{\theta,\tau} \neq 0$. Then
	\begin{equation}
		\label{eq:jensen}
		\mathbb{E}_{\theta|\mathcal{D}_{T_\tau}}\!\left[G_\tau(\theta)\right] \;<\; G_\tau(m_\tau).
	\end{equation}
	Moreover, to second order in $\Sigma_{\theta,\tau}$,
	\begin{equation}
		\label{eq:gap}
		G_\tau(m_\tau) - \mathbb{E}_{\theta|\mathcal{D}_{T_\tau}}\!\left[G_\tau(\theta)\right]
		\;\approx\;
		-\tfrac{1}{2}\operatorname{tr}\!\left(\nabla^2_{\theta} G_\tau(m_\tau)\,\Sigma_{\theta,\tau}\right) \;>\; 0,
	\end{equation}
	and under a quadratic approximation to $U$ the difference is proportional to $z_\tau'\Sigma_{\theta,\tau} z_\tau$ and return second moments, hence strictly increasing in $\|z_\tau\|$ and in every eigenvalue of $\Sigma_{\theta,\tau}$.
\end{proposition}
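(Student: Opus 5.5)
The plan is to show that $G_\tau$ is strictly concave in $\theta$ along every direction that matters, and then apply Jensen's inequality under the posterior. Linearity of the policy rule does most of the work.

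\textbf{Step 1: concavity of $G_\tau$.} Since $w_\tau(\theta)=w_b+\theta z_\tau$, the portfolio return $r_p(\theta)=w_b'R_{\tau+1}+z_\tau'\theta' R_{\tau+1}$ is affine in $\theta$. Writing $\vartheta=\operatorname{vec}(\theta)$, we have $r_p(\theta)=w_b'R_{\tau+1}+x_{\tau+1}'\vartheta$ with $x_{\tau+1}=z_\tau\otimes R_{\tau+1}$. Composing the strictly concave $U$ with an affine map makes $\theta\mapsto U(w_\tau(\theta)'R_{\tau+1})$ concave for every realisation. Taking $\mathbb{E}_\tau$ preserves concavity, so $G_\tau$ is concave. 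Differentiating under the integral, which is justified by $U\in C^2$ and a standard integrability condition I will state as a maintained assumption, gives
\begin{equation*}
\nabla^2_\vartheta G_\tau(\vartheta)=\mathbb{E}_\tau\!\left[U''(r_p(\theta))\,x_{\tau+1}x_{\tau+1}'\right].
\end{equation*}
For any direction $v$, $v'\nabla^2 G_\tau v=\mathbb{E}_\tau[U''(\cdot)(x_{\tau+1}'v)^2]\le 0$. Equality holds only if $x_{\tau+1}'v=0$ almost surely.

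\textbf{Step 2: strict Jensen inequality.} Jensen's inequality for a concave function gives $\mathbb{E}[G_\tau(\theta)]\le G_\tau(m_\tau)$. Strictness requires that $G_\tau$ is not affine on the support of the posterior.

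This is the step I expect to be the main obstacle. The claim as stated, with only $\Sigma_{\theta,\tau}\neq 0$, is not literally true. If the posterior varies $\theta$ only in directions $v$ with $x_{\tau+1}'v=0$ a.s. (for example, $z_\tau=0$, or $\Sigma_{\theta,\tau}$ supported on directions orthogonal to $z_\tau$), the gap vanishes. I would therefore add the non-degeneracy condition
\[
\operatorname{Var}\bigl(x_{\tau+1}'(\vartheta-m_\tau)\mid\mathcal{D}_{T_\tau}\bigr)>0 .
\]
Under the natural assumption that $\mathbb{E}_\tau[R_{\tau+1}R_{\tau+1}']$ is positive definite, this condition is equivalent to $(z_\tau'\otimes I)\Sigma_{\theta,\tau}(z_\tau\otimes I)\neq 0$. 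Given this, strictness follows by conditioning on $R_{\tau+1}$. Apply the strict Jensen inequality for the strictly concave scalar $U$ to the non-degenerate random variable $x_{\tau+1}'\vartheta$, then integrate over returns. Tonelli/Fubini lets us swap the posterior and return expectations. This route is cleaner than arguing from the Hessian.

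\textbf{Step 3: second-order gap.} Taylor-expand $G_\tau$ around $m_\tau$:
\[
G_\tau(\theta)=G_\tau(m_\tau)+\nabla G_\tau(m_\tau)'(\vartheta-m_\tau)+\tfrac12(\vartheta-m_\tau)'\nabla^2G_\tau(m_\tau)(\vartheta-m_\tau)+o(\|\vartheta-m_\tau\|^2).
\]
Take posterior expectations. The linear term vanishes because $m_\tau$ is the posterior mean, and the quadratic term becomes $\tfrac12\operatorname{tr}(\nabla^2G_\tau(m_\tau)\Sigma_{\theta,\tau})$. This yields \eqref{eq:gap}. Positivity of $-\tfrac12\operatorname{tr}(\nabla^2 G_\tau\Sigma_{\theta,\tau})$ follows because the trace of the product of a negative semidefinite matrix with a PSD matrix is $\le 0$. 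It is strictly negative under the non-degeneracy condition of Step 2.

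\textbf{Step 4: quadratic utility.} With $U(r)=r-\tfrac{\gamma}{2}r^2$, we have $U''=-\gamma$ and the expansion is exact. The gap equals
\[
\tfrac{\gamma}{2}\operatorname{tr}\!\bigl(\mathbb{E}_\tau[x_{\tau+1}x_{\tau+1}']\Sigma_{\theta,\tau}\bigr),
\]
and $\mathbb{E}_\tau[x_{\tau+1}x_{\tau+1}']=z_\tau z_\tau'\otimes M_\tau$ with $M_\tau=\mathbb{E}_\tau[R_{\tau+1}R_{\tau+1}']$.

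If the posterior covariance has the Kronecker form $\Sigma_{\theta,\tau}=\Sigma_z\otimes\Sigma_R$, the gap factors as
\[
\tfrac{\gamma}{2}\,(z_\tau'\Sigma_z z_\tau)\operatorname{tr}(M_\tau\Sigma_R).
\]
In general, the gap is $\tfrac{\gamma}{2}\mathbb{E}_\tau[(x_{\tau+1}')\Sigma_{\theta,\tau}x_{\tau+1}]$, a quadratic form in $z_\tau$ weighted by return second moments. This is what "proportional to $z_\tau'\Sigma_{\theta,\tau}z_\tau$" means in the statement.

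Monotonicity follows from these formulas. Scaling $z_\tau\mapsto cz_\tau$ multiplies the gap by $c^2$. The gap is linear and monotone in $\Sigma_{\theta,\tau}$ in the Loewner order, since $\operatorname{tr}(A\Sigma)$ with $A\succeq0$ is nondecreasing. It is strictly increasing in each eigenvalue whenever $M_\tau\succ0$ and the corresponding eigenvector has a nonzero projection onto $z_\tau\otimes\mathbb{R}^K$. I would state that projection condition explicitly.
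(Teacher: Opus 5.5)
Your proof is correct and follows the paper's skeleton: affine dependence of the portfolio return on $\theta$ plus concavity of $U$, Jensen for the weak inequality, and a second-order expansion at $m_\tau$ in which the linear term vanishes and $\mathbb{E}[v'Av]=\operatorname{tr}(A\,\mathbb{E}[vv'])$ produces \eqref{eq:gap}.

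You depart from the paper at the strictness step, and there you are right where the paper is not. The paper asserts that $G_\tau$ is strictly concave because the return is linear in $\theta$. But $\theta\mapsto\theta z_\tau$ has a $K(L-1)$-dimensional kernel (all of $\mathbb{R}^{K\times L}$ if $z_\tau=0$), and $G_\tau$ is constant along it. So $\Sigma_{\theta,\tau}\neq 0$ alone does not give strictness; $z_\tau=0$ already yields a zero gap. Your condition $(z_\tau'\otimes I_K)\Sigma_{\theta,\tau}(z_\tau\otimes I_K)\neq 0$ is the correct repair. Combined with conditioning on $R_{\tau+1}$ and applying scalar strict Jensen, it needs only strict concavity of $U$. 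In the paper's diagonal Laplace implementation, with all $v_{k,\ell}>0$, it reduces to $z_\tau\neq 0$.

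One small point on your Step~3: the claim that the trace term is strictly negative also needs $U''<0$ on the relevant support, since a strictly concave $C^2$ function may have isolated zeros of $U''$. This is harmless for CRRA or quadratic utility, and your Step~2 route does not need it.

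For the final clause, the paper's appendix proof of Proposition~1 gives no derivation. Its nearest counterpart is the variance decomposition under $\operatorname{Var}(\theta z_\tau)=(z_\tau'\Sigma_\theta z_\tau)I_K$, which is your Kronecker case with $\Sigma_R=I_K$. Your exact computation then gives a gap of $\tfrac{\gamma}{2}(z_\tau'\Sigma_z z_\tau)\bigl(\operatorname{tr}\Omega+\|\mu_{\tau+1}\|^2\bigr)$, with $\Omega=\operatorname{Var}(R_{\tau+1})$. This shows the $\Sigma$-dependent $\operatorname{tr}\Omega$ piece, which the paper leaves inside its ``market risk'' term, also belongs to the gap relative to the plug-in. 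Your formula is therefore the faithful version of ``return second moments.''

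Your monotonicity caveats are genuine corrections to the statement, not over-caution: the gap grows along rays $c\,z_\tau$ rather than in $\|z_\tau\|$ per se. It grows strictly only in eigenvalues whose eigenvectors $\operatorname{vec}(V)$ satisfy $Vz_\tau\neq 0$.
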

The proof is in Appendix \ref{sec:proofs}. 
\noindent The proposition has three direct implications, each of which maps to an empirical prediction which forms the basis of our empirical exercise as follows.

\medskip\noindent
\textbf{(i) PPP overstates expected utility.} The PPP investor evaluates $G_\tau(\hat\theta) \approx G_\tau(m_\tau)$ and treats this as the expected utility of the implemented policy. Inequality~\eqref{eq:jensen} says this number is always too high. The utility that will actually be achieved in expectation is $\mathbb{E}_\theta[G_\tau(\theta)] < G_\tau(m_\tau)$. 

\medskip\noindent
\textbf{(ii) Overexposure is worst precisely when PPP bets hardest.} From~\eqref{eq:gap}, the utility gap grows with $z_\tau'\Sigma_{\theta,\tau} z_\tau$, the quadratic form of posterior uncertainty evaluated at the current signal vector. When signals are large in magnitude, PPP generates large tilts \emph{and} faces the largest utility overstatement. The two forces compound, i.e. PPP is most overconfident at exactly the moments it is most aggressive.

\medskip\noindent
\textbf{(iii) PPP underestimates portfolio risk.} Under a mean-variance approximation, the investor's objective is $\mathbb{E}[r_{p,\tau+1}] - \frac{\gamma}{2}\operatorname{Var}(r_{p,\tau+1})$. The law of total variance gives
\begin{align}
\label{totvar}
\operatorname{Var}(r_{p,\tau+1})
\;=\;
\underbrace{\mathbb{E}_{\theta}\!\left[\operatorname{Var}\!\left(r_{p,\tau+1}\mid\theta\right)\right]}_{\text{market risk}}
\;+\;
\underbrace{z_\tau'\Sigma_{\theta,\tau} z_\tau \cdot \|\mu_{\tau+1}\|^2}_{\text{estimation risk}},
\end{align}
where $\mu_{\tau+1} = \mathbb{E}\tau[R{\tau+1}]$ is the vector of conditional expected returns. The proof is in Appendix \ref{sec:proofs}. Notice that PPP ignores the second term. Ignoring parameter uncertainty is therefore equivalent to understating portfolio risk, and the resulting overexposure to signals is an arithmetic consequence of an incomplete objective. In implementation, PPP further approximates the first term using a plug-in policy at $\theta=m_\tau$. BPPP internalises the estimation-risk term by default, without any \textit{ad hoc} penalty or constraint.

\medskip

\noindent Two further observations follow directly from Proposition~1. First, the PPP investor solves for optimal policies under the assumption that posterior uncertainty is zero, i.e.\ $\Sigma_{\theta,\tau} = 0$. This is the certainty-equivalent approximation to the Bayesian decision rule. Proposition~1 quantifies the cost of this approximation, given by the estimation risk term in Equation \ref{totvar}, which grows without bound as $L/T$ increases. In high-dimensional environments, the approximation error is largest.

Second, empirical gains associated to a BPPP ought to be largest for risk averse investors. Because the gap~\eqref{eq:gap} scales with $\gamma$ under quadratic approximation, the welfare cost of PPP is strictly increasing in risk aversion. A moderately risk-averse investor ($\gamma = 5$) suffers a utility overstatement proportional to $\frac{5}{2} z_\tau'\Sigma_{\theta,\tau} z_\tau$, while a highly risk-averse investor ($\gamma = 10$) suffers twice as much. 

\subsection{The role of the Prior}
\label{subsec:prior}

The prior $p(\theta)$ in equation~\eqref{eq:posterior} encodes the investor's \textit{ex-ante} beliefs about how aggressively the policy should respond to signals. In this context, the prior summarises beliefs about the appropriate scale of portfolio tilts rather than a fully specified structural return model. It is worth noting that the prior also governs the posterior variance $\Sigma_{\theta,\tau}$ that determines the size of the utility gap in Proposition~1. Our baseline prior at time $t$ is
\[
  \theta_t \sim \mathcal{N}(M_t, \nu I),
\]
where $M_t$ is the prior mean and $\nu$ is the effective prior variance. We set $M_t = \hat{\theta}_{t-1}$ so the prior penalises deviations from previous policy parameters, mimicking a regular rebalancing procedure by investors. The hyperparameter $\nu$ is set from a single economically interpretable quantity, the target-tilt standard deviation $\delta$ which defines the investor's prior belief about the typical magnitude of signal-induced deviations from $M_t$. With $L$ standardised signals, the variance of the aggregate tilt on asset $k$ is $L\sigma_{\theta}^2$, so setting the tilt's standard deviation to $\delta$ gives
\begin{equation}
  \label{eq:sigma_theta}
  \sigma_{\theta} = \frac{\delta}{\sqrt{L}}.
\end{equation}
This calibration is dimension-adaptive.\footnote{The numerical implementation stores $\theta$ at $K$ times the portfolio-weight scale and computes tilts as $\tilde\theta z_t/K$, so the variance of the aggregate tilt is $L\sigma_{\tilde\theta}^2/K^2$. Setting this equal to $\delta^2$ gives $\sigma_{\tilde\theta} = K\delta/\sqrt{L}$ internally. Since $\theta = \tilde\theta/K$, the implied prior standard deviation on the portfolio-weight-scaled parameter is $\sigma_\theta = \delta/\sqrt{L}$ as stated, and the $K$ factors cancel exactly in the MAP and Laplace computations.} As $L$ grows, each individual coefficient is shrunk more tightly at rate $1/\sqrt{L}$, keeping the aggregate tilt stable. With $L=242$ and $\delta=0.35$, we obtain $\sigma_{\theta} \approx 0.0225$. To maintain a meaningful balance between prior and data as the estimation window expands, we scale the effective prior variance by $\max(T/L, 1)$
\begin{equation}
  \label{eq:prior_var}
  \nu = \sigma_{\theta}^2 \cdot \max\!\left(\frac{T}{L}, 1\right).
\end{equation}
When $T \gg L$ the prior is diffuse and data dominate, whereas as $T \approx L$ the prior retains meaningful weight and prevents overfitting.

\textbf{Horseshoe prior.} The Gaussian prior shrinks all coefficients uniformly. In high-dimensional settings, predictability may instead be concentrated in a small number of signals with the majority being pure noise. To accommodate this possibility, we also estimate a variant using the regularised horseshoe prior of \citet{Carvalho2010} as refined by \citet{Piironen2017} and defined as follows
\begin{align}
  \label{eq:horseshoe}
  \theta_{t,k,\ell} \mid \tilde\lambda_{t,k,\ell}, M_t
  &\;\sim\; \mathcal{N}\!\left(M_{t,k,\ell},\; \tilde\lambda_{t,k,\ell}^2\right),\\[4pt]
  \tilde\lambda_{t,k,\ell}^2
  &\;=\; \frac{c^2\lambda_{t,k,\ell}^2\tau^2}{c^2 + \lambda_{t,k,\ell}^2\tau^2}, \notag
\end{align}
where $\tau$ is a global shrinkage scale, $\lambda_{t,k,\ell}$ are local scales, $c$ is the slab scale of the Horseshoe, and $\sigma^2$ is residual variance. The global scale is calibrated as in \citet{Piironen2017}
\begin{equation}
  \label{eq:pv_tau}
  \tau_0 = \frac{p_0}{L - p_0}\frac{\sigma}{\sqrt{T}},
\end{equation}
with $p_0$ reflecting the prior expectation about the number of signals that carry genuine predictive content. The posterior shrinkage factor for each signal is
\[
  \kappa_{k,\ell} = \frac{1}{1 + \tilde\lambda_{k,\ell}^2\|z_\ell\|^2/\sigma^2},
\]
where $\kappa \approx 0$ means the signal survives and $\kappa \approx 1$ means it is shrunk towards the mean. No coefficient is set exactly to zero, preserving the differentiable structure of the policy rule. Table~\ref{tab:priors} summarises the two prior specifications used in the empirical exercise later.

\begin{table}[ht]
\centering
\caption{Summary of Prior Specifications}
\label{tab:priors}
\begin{threeparttable}
\small
\begin{tabular}{llll}
\toprule
Model & Prior & Key hyperparameter & Shrinkage type \\
\midrule
BPPP       & $\mathcal{N}(M_t, \nu I)$                   & $\delta$ (target tilt SD) & Global, uniform \\[3pt]
Horseshoe  & $\mathcal{N}(M_t, \tilde\lambda^2)$ & $p_0$ (expected signals)    & Global-local, adaptive \\
\bottomrule
\end{tabular}
\begin{tablenotes}\footnotesize
\item \textit{Notes:} $\nu$ is defined in Equation~\eqref{eq:prior_var}. $\tilde\lambda^2$ is the local scale from Equation~\eqref{eq:horseshoe}. The Gaussian prior applies uniform shrinkage around $M_t = \hat\theta_{t-1}$. The horseshoe prior uses the same $M_t$ but adapts shrinkage signal-by-signal through local scales.
\end{tablenotes}
\end{threeparttable}
\end{table}

\noindent The two priors span a spectrum of assumptions about signal sparsity. The Gaussian prior is appropriate when predictability is diffuse, the Horseshoe is appropriate when it is sparse. Comparing them empirically allows us to assess whether the gains from BPPP depend on prior form or arise more broadly from posterior averaging. In this paper, we treat the Gaussian-vs-Horseshoe comparison as evidence on prior misspecification risk rather than a definitive model-selection result. Regardless of the prior chosen, none of these specifications impose hard constraints on the policy, perform variable selection in the frequentist sense, or alter the functional form of the decision rule. Their role is to determine the scale at which portfolio weights react to signals, controlled by Proposition~1's $\Sigma_{\theta,\tau}$.

Given the posterior in Equation~\eqref{eq:posterior}, portfolio weights are formed by averaging $M$ draws of $\theta$ per Equation~\eqref{eq:bppp_weights}. Draws are obtained via a diagonal Laplace approximation centred at the maximum a posteriori (MAP) estimate, which is computed by L-BFGS-B optimisation of the penalised utility objective. The estimation window starts at $T_0 = 120$ months and grows recursively. At each step the prior mean is set to $M_t = \hat{\theta}_{t-1}$, so the prior anchors the new estimate to the previous period's policy. The Horseshoe variant replaces the Gaussian prior update with a coordinate-ascent step over the local scales $\{\lambda_{k,\ell}\}$ and global scale $\tau$. Full algorithmic details are in Appendix~\ref{sec:estimation}.

\section{Data}
\label{sec:data}

We use monthly returns on the six Fama--French factors. Market excess return (MKT-RF), size (SMB), value (HML), profitability (RMW), momentum (UMD), and investment (CMA), together with the one-month T-bill rate, downloaded from Professor Ken French's Data Library. These form the investment universe ($K=6$). The sample runs from July 1963 to December 2023.

As for predictive signals, we use 212 anomaly-based predictors drawn from the Open Source Asset Pricing database of \citet{ChenZimmermann2022}, constructed from CRSP and Compustat data and chosen for their replicability and breadth across the documented anomaly literature. The remaining 30 are factor-specific signals constructed following \citet{Haddad2020}, covering time-series momentum, cross-sectional momentum, factor valuation, reversal, and volatility for each factor. This combination of signal types ensures broad coverage of both firm-characteristic and factor-dynamics predictability.

Within each expanding estimation window, signals are standardised to zero mean and unit variance using only in-sample moments, ensuring no look-ahead bias. Missing values are replaced by zero after standardisation (mean imputation on the standardised scale). Predictors available at time $t$ form portfolio weights at $t$, applied to returns realised at $t+1$. We use an expanding window with an initial training sample of 120 months (1963M7--1973M7) and evaluate performance from 1973M8 onward, yielding 605 monthly out-of-sample observations (1973M8--2023M12). Portfolio rebalancing is monthly.

\section{Empirical Results}
\label{sec:results}

Our empirical setting places the exercise squarely in the environment where Proposition~1 predicts the overexposure correction to be largest. We consider $L=242$ signals and $K=6$ assets consisting of the five \cite{FamaFrench2015} plus momentum factors, therefore deploying the Parametric Portfolio Policy framework of \cite{Brandt2009} in factor space. The empirical results are organised around the three predictions of Proposition~1, that PPP overstates achievable utility, that overexposure is worst when signals are strongest, and that the gap grows with risk aversion. We begin by comparing performance among methods over the full sample. Table~\ref{tab:performance} reports annualised out-of-sample statistics for all strategies. BPPP achieves a gross Sharpe ratio of 1.32 and PPP 1.05, both above the market benchmark (0.74), mean-variance portfolio (1.04), and simple momentum PPP strategy (0.93), which consists of PPP with a single momentum signal. As shown in Table~\ref{tab:lw}, a circular block-bootstrap Sharpe-difference test rejects equal Sharpe ratios between BPPP and the market at the 1\% level. PPP also outperforms significantly, but with lower precision, consistent with noisier coefficient estimates.

\begin{table}[ht]
\centering
\caption{Out-of-Sample Performance Comparison of Portfolio Strategies}
\label{tab:performance}
\begin{threeparttable}
\small
\setlength{\tabcolsep}{2pt}
\begin{tabular}{lcccccccccc}
\toprule
 & Mean & Vol & Sharpe & MaxDD & VaR 95 & CVaR 95 & Turnover & Skew & Kurt. & Sharpe (net) \\
 & (\%) & (\%) &  & (\%) & (\%) & (\%) &  &  &  &  \\
\midrule
Benchmark (Mkt) & 11.83 & 16.01 & 0.74 & -50.31 & -7.33 & -10.11 & 0.00 & -0.51 & 1.76 & 0.74 \\
Mean-Variance   & 7.12  & 6.87  & 1.04 & -15.74 & -2.70 & -4.44  & 0.39 & -0.39 & 3.12 & 1.03 \\
Simple Mom      & 8.93  & 9.57  & 0.93 & -31.67 & -4.16 & -5.95  & 1.26 & -0.49 & 2.40 & 0.92 \\
PPP             & 11.15 & 10.64 & 1.05 & -37.21 & -3.76 & -6.09  & 9.54 & 0.30  & 5.37 & 0.96 \\
BPPP            & 12.18 & 9.24  & 1.32 & -24.50 & -3.28 & -4.85  & 6.03 & -0.11 & 1.39 & 1.25 \\
Horseshoe       & 12.18 & 10.36 & 1.18 & -23.20 & -3.93 & -5.86  & 5.55 & -0.18 & 2.06 & 1.12 \\
\bottomrule
\end{tabular}
\vspace{0.5em}
\begin{tablenotes}\footnotesize
\item \textit{Notes:} Expanding window, initial training sample 120 months (1963M7), out-of-sample period 1973M8--2023M12. Mean and volatility are annualised. MaxDD is maximum drawdown. VaR and CVaR at 95\% denote Value-at-Risk and Conditional Value-at-Risk (Expected Shortfall). Turnover is average one-way per period. Sharpe (net) deducts 10 bps transaction costs per unit of one-way turnover.
\end{tablenotes}
\end{threeparttable}
\end{table}

By integrating over the posterior distribution of $\theta$, BPPP naturally shrinks extreme policies, dampens the effect of noisy predictors, and produces smoother portfolio dynamics. These effects are particularly pronounced in finite samples and in the presence of portfolio constraints, where estimation error can otherwise translate into excessive turnover and poor downside performance. Importantly, BPPP does not alter the underlying economic signals or the functional form of the policy rule. Instead, it modifies how uncertainty about the policy parameters is incorporated into decision making. In this sense, BPPP represents a principled extension of the PPP framework rather than an alternative modelling approach.

Proposition~1 predicts that PPP overreacts to signals, and this is most visible in turnover. PPP generates average one-way turnover of 9.54 per period, the highest of any active strategy. BPPP reduces this to 6.03 and horseshoe to 5.55. At 10 basis points per unit of one-way turnover, PPP's net Sharpe falls from 1.05 to 0.96, while BPPP's falls only from 1.32 to 1.25. At 50 basis points, PPP collapses to 0.60 while BPPP remains near 0.99.

The turnover difference reflects the mechanism of Proposition~1 directly. Because $\Sigma_{\theta,\tau}$ is non-negligible relative to $T/L$, BPPP shrinks tilts in the direction of high signal magnitude. This prevents the over-reaction to large signal realisations that drives PPP's excessive position changes. Figure~\ref{fig:factor_loading} confirms this at the factor level. PPP tilts chase signals, generating high weight volatility. BPPP maintains a more balanced factor exposure and markedly lower time-series weight volatility across all six factors. The return distribution (Figure~\ref{fig:ppc}, left panel) reflects this. PPP exhibits positive skewness (0.30) and heavy kurtosis (5.37) consistent with episodic overexposure, whereas BPPP is near-symmetric (-0.11) with lighter tails (Kurtosis of 1.39).

\begin{figure}[ht]
	\centering
	\IncludeGraphicsMaybe[width=0.90\textwidth]{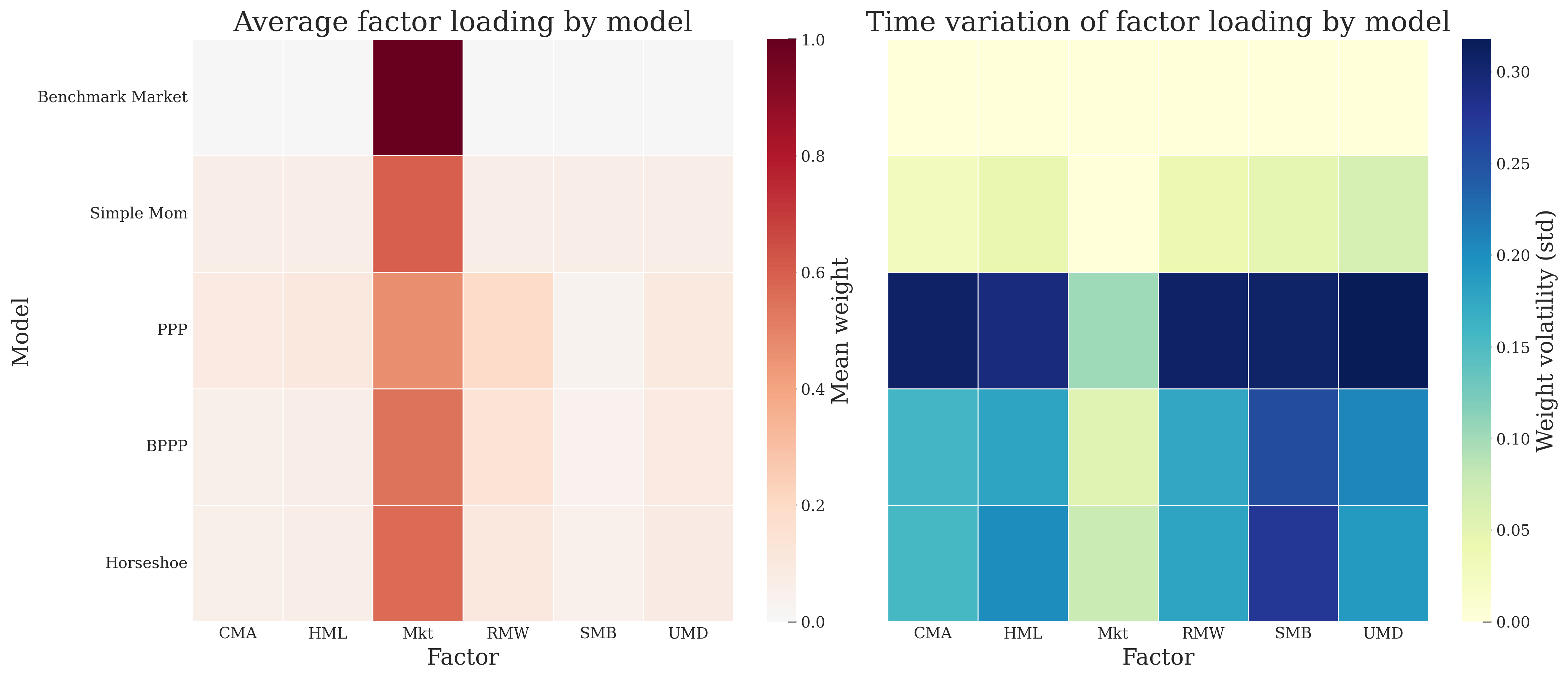}
	\caption{Factor Exposures by Model}
	\label{fig:factor_loading}
	\vspace{0.5em}
	\begin{minipage}{0.90\textwidth}\footnotesize
		\textit{Notes:} Left: average portfolio weight on each factor. Right: time-series standard deviation of factor weights. BPPP exhibits substantially lower weight volatility across all factors, consistent with the overexposure correction predicted by Proposition~1.
	\end{minipage}
\end{figure}

The mechanism operates at the signal level as well as the factor level. Appendix Figures~\ref{fig:top20_ppp}--\ref{fig:top20_hs} display heatmaps of the largest signal--factor coefficient pairs by absolute magnitude across PPP, BPPP, and the Horseshoe. Two patterns emerge. First, all three models draw on a common pool of predictive signals. Momentum-related signals from \citet{Haddad2020} and valuation-based anomalies from \citet{ChenZimmermann2022} consistently rank among the most influential across specifications, suggesting that the predictive information in the signal set has systematic structure that each method detects. Second, the scale and dispersion of the largest coefficients differ markedly across methods. PPP concentrates weight on a small number of extreme loadings, a direct reflection of the overfit that Proposition~1 attributes to ignoring $\Sigma_{\theta,\tau}$. BPPP distributes weight more evenly, posterior averaging dilutes the contribution of any single signal realisation, shrinking extreme coefficients without eliminating them. The Horseshoe achieves the most aggressive regularization of large coefficients through its signal-specific local scales, yet, as we discuss below, without inducing the sparsity its design assumes.

The evidence bears directly on the sparse vs. dense debate in the return predictability literature \citep{Kozak2020,Gu2020} and by \cite{}. Under a sparse model, a small number of signals would carry all predictive information and the remainder would be driven to zero. The horseshoe shrinkage factor $\kappa$ is designed precisely to discriminate signals, i.e. $\kappa_{k,\ell}\approx 0$ indicates a surviving signal; $\kappa_{k,\ell}\approx 1$ indicates collapse toward the prior mean. Figure~\ref{fig:kappa_dist} shows that the empirical $\kappa$ distribution is concentrated near zero throughout the out-of-sample period (final OOS mean $\approx 0.03$, median $\approx 0.02$, with 99.6\% of coefficients below 0.3). This evidence suggests \emph{dense} predictability, the horseshoe finds no evidence for a small dominant subset of signals. Instead, broad signal 'survival' characterises the data. The bootstrap stability analysis of Figure~\ref{fig:bootstrap} reinforces this conclusion. Drawing 500 random subsets of 50 signals from the full 242, the distribution of mean absolute horseshoe coefficients is tightly concentrated, confirming that no small cluster of signals is the primary performance driver. This is consistent with \citet{Kozak2020} and \citet{Haddad2020}, who argue that factor timing predictability reflects a diffuse combination of many signals rather than a sparse set of dominant predictors. In this environment, the Gaussian prior's uniform shrinkage is better matched to the structure of the data than the horseshoe's sparsity assumption, which helps explain BPPP's outperformance of the horseshoe in our sample. The relative performance of the two prior specifications is therefore informative beyond portfolio statistics. It constitutes out-of-sample evidence on the density structure of return predictability in factor timing.

Next, Proposition~1 identifies the welfare cost of PPP as the utility gap $G_\tau(m_\tau) - \mathbb{E}_\theta[G_\tau(\theta)]$. The empirical counterpart of this gap is the difference of certainty-equivalents. Table~\ref{tab:econ} reports CE returns under CRRA utility with $\gamma=5$. BPPP delivers an annualised CE of 10.52\%, corresponding to a difference of 536 basis points relative to the market benchmark. PPP delivers a CE of 8.68\%. The 184 basis point gap between BPPP and PPP in CE space is the empirical counterpart of the utility overstatement that Proposition~1 predicts. Empirical CE is computed using the exact CRRA transformation, whereas Proposition~1 uses a second-order mean-variance approximation; the correspondence is therefore economic rather than algebraically exact.

Figure~\ref{fig:frontier} maps all strategies in risk-return space. BPPP occupies the north-west of the frontier, combining higher return and lower volatility than PPP and the market benchmark. Figure~\ref{fig:cumret} shows cumulative wealth over the full out-of-sample period. With respect to tail risk, the maximum drawdown of PPP (-37.2\%) is substantially larger than that of BPPP (-24.5\%). This is a direct consequence of overexposure. PPP holds extreme positions precisely when signals are anomalously large, and these are often periods of elevated market stress when large tilts carry the greatest downside. BPPP's posterior averaging dampens the response to extreme signal realisations, providing asymmetric protection against tail events. Notwithstanding, tail risk as measured by Value-at-Risk is not substantially higher. The spanning regressions in Table~\ref{tab:spanning} confirm that BPPP generates higher benchmark-orthogonal value. Alpha is 6.86\% versus 6.53\% for PPP, with higher $R^2$ (0.61 vs.\ 0.35), indicating that BPPP maintains meaningful market exposure while adding alpha more efficiently.

\section{Robustness}
\label{sec:robustness}

This section is organized around five lines of enquiry. First, are performance statistics, in particular risk-adjusted returns of different methods statistically significant and do they represent genuine alpha? Second, is the welfare advantage of BPPP monotonically increasing in risk aversion as Proposition~1 predicts? Third, does the performance gap survive realistic transaction costs? Fourth, is subperiod performance consistent or driven by a episodic differences? Finally, are results sensitive to prior choices and specification.

\subsection{Statistical Significance and Spanning Tests}
\label{subsec:rob_stat}

A first concern is whether the performance gap between BPPP and benchmarks reported in Table~\ref{tab:performance} is a statistical artefact of the specific sample choice. We address this with a bootstrap Sharpe-difference test and a market-spanning regression. Table~\ref{tab:lw} reports Sharpe-difference tests against the market benchmark following \citet{LedoitWolf2008}. Under this procedure, BPPP rejects the null of equal Sharpe ratio at the 1\% level with a $t$-statistic of 5.61. The Horseshoe is also strongly significant ($t = 4.18$). PPP is significant at the 5\% level ($t = 2.34$) but with a bootstrap standard error roughly 30\% larger than BPPP's, reflecting the higher sampling variability of its coefficients when estimation risk is ignored. The mean-variance strategy attains marginal significance ($t = 1.99$).

\begin{table}[ht]
\centering
\caption{Bootstrap Sharpe-Difference Tests vs.\ Market Benchmark}
\label{tab:lw}
\begin{threeparttable}
\small
\begin{tabular}{lcccc}
\toprule
Portfolio & Sharpe Diff & Bootstrap SE & $t$-stat & $p$-value \\
\midrule
BPPP       & 0.579 & 0.103 & 5.61 & $<$0.001$^{***}$ \\
Horseshoe  & 0.437 & 0.105 & 4.18 & $<$0.001$^{***}$ \\
PPP        & 0.309 & 0.132 & 2.34 & 0.019$^{**}$ \\
Mean-Var   & 0.297 & 0.149 & 1.99 & 0.047$^{**}$ \\
\bottomrule
\end{tabular}
\begin{tablenotes}\footnotesize
\item \textit{Notes:} Null hypothesis: Sharpe ratio equals that of the market. Studentised bootstrap test of \citet{LedoitWolf2008} with block length $\lfloor T^{1/3}\rfloor = 8$ months. $^{***}$, $^{**}$, $^{*}$: significant at 1\%, 5\%, 10\%.
\end{tablenotes}
\end{threeparttable}
\end{table}

The pairwise BPPP vs.\ PPP Sharpe gap of 0.27 is directionally positive but measured with substantial noise, as both strategies exploit the same signal set and factor universe. The BPPP advantage over PPP reflects a difference in how parameter uncertainty is handled, not access to a fundamentally different information set.

\begin{table}[ht]
\centering
\caption{Spanning Tests: Alpha and Beta vs.\ Market Benchmark}
\label{tab:spanning}
\begin{threeparttable}
\small
\begin{tabular}{lccccc}
\toprule
Portfolio & $\alpha$ (ann.\ \%) & $\beta$ & $t$-stat & $p$-value & $R^2$ \\
\midrule
PPP        & 6.53 & 0.39 & 5.27 & $<$0.001$^{***}$ & 0.35 \\
BPPP       & 6.86 & 0.45 & 8.20 & $<$0.001$^{***}$ & 0.61 \\
Mean-Var   & 4.26 & 0.24 & 5.20 & $<$0.001$^{***}$ & 0.32 \\
Horseshoe  & 6.44 & 0.48 & 6.52 & $<$0.001$^{***}$ & 0.56 \\
\bottomrule
\end{tabular}
\begin{tablenotes}\footnotesize
\item \textit{Notes:} OLS regression of model excess returns on market excess return. Alpha annualised. $t$-stats and $p$-values for $H_0\!: \alpha = 0$.
\end{tablenotes}
\end{threeparttable}
\end{table}

Table~\ref{tab:spanning} sharpens the interpretation via OLS regressions of excess returns on the market. BPPP generates an annualised alpha of 6.86\% ($t = 8.20$), above PPP's 6.53\% ($t = 5.27$). Two contrasts stand out. First, BPPP's alpha $t$-statistic is 56\% larger than PPP's despite very similar point estimates. The posterior averaging reduces residual volatility, making the alpha estimate much more precise. Second, BPPP has a higher market beta (0.45 versus 0.39 for PPP) and substantially higher $R^2$ (0.61 versus 0.35). This combination (higher systematic loading with lower residual variance) indicates that BPPP builds market exposure more deliberately. PPP's lower $R^2$ reflects the idiosyncratic noise introduced by extreme coefficient estimates that BPPP's prior dampens. Crucially, BPPP's outperformance is not achieved by tilting away from market risk. It reflects the more efficient construction of factor tilts around a stable market factor.

\subsection{Welfare and Risk-Aversion Analysis}
\label{subsec:rob_gamma}

One of Proposition~1 predictions is that the utility gap between BPPP and PPP should grow monotonically with $\gamma$ because the Jensen's inequality bound in Equation~\eqref{eq:gap} scales linearly with risk aversion under a quadratic approximation (see Appendix \ref{proof1} for more details). Table~\ref{tab:gamma} provides a direct out-of-sample test.

\begin{table}[ht]
	\centering
	\caption{Certainty-Equivalent Returns Under Alternative Risk Aversion}
	\label{tab:gamma}
	\begin{threeparttable}
		\small
		\begin{tabular*}{\textwidth}{@{\extracolsep{\fill}}lccc}
			\toprule
			& $\gamma=2$ & $\gamma=5$ & $\gamma=10$ \\
			\midrule
			PPP       & 10.51 &  8.68 &  5.58 \\
			BPPP      & 11.94 & 10.52 &  8.13 \\
			Horseshoe & 11.69 &  9.90 &  6.82 \\
			Benchmark &  9.62 &  5.15 & $-$2.94 \\
			\bottomrule
		\end{tabular*}
		\begin{tablenotes}[flushleft]\footnotesize
			\item \textit{Notes:} Annualised CE returns (\%) under exact CRRA utility with risk aversion $\gamma$. Consistent with Proposition~1, the BPPP advantage over PPP is monotonically increasing in $\gamma$.
		\end{tablenotes}
	\end{threeparttable}
\end{table}

The evidence strongly confirms the proposition. At $\gamma = 2$ the BPPP--PPP CE gap is 142 basis points annually (11.94\% versus 10.51\%). At $\gamma = 5$ it rises to 184 basis points (10.52\% versus 8.68\%). At $\gamma = 10$ it reaches 255 basis points (8.13\% versus 5.58\%). The monotone increase is out-of-sample evidence of the overexposure mechanism. As risk aversion rises, the investor weights large negative returns more heavily under the concave utility function, and PPP's tendency to hold extreme positions exactly during high-signal-magnitude periods carries an increasingly large welfare cost. The Horseshoe lies between BPPP and PPP at all levels of $\gamma$, consistent with its intermediate degree of overexposure correction.

The benchmark CE trajectory is equally instructive. At $\gamma = 2$ the benchmark delivers a CE of 9.62\%, below PPP's 10.51\%. By $\gamma = 10$, benchmark CE collapses to $-$2.94\%. Passive factor exposure destroys welfare because the variance penalty swamps the expected return. This provides context for interpreting the BPPP advantage under institutional risk aversion ($\gamma$ = 5-10). The active correction for estimation risk is not a marginal refinement but a welfare-determining design choice. At $\gamma$ = 5, the performance fee that a benchmark investor would pay to switch to BPPP is 500 basis points per annum (Table~\ref{tab:econ} in the appendix), compared with 331 basis points for PPP. The 169 basis-point fee difference is the monetary equivalent of the estimation-risk correction in our sample.

\subsection{Transaction Costs and Trading Frictions}
\label{subsec:rob_tc}

\begin{figure}[ht]
\centering
\IncludeGraphicsMaybe[width=0.85\textwidth]{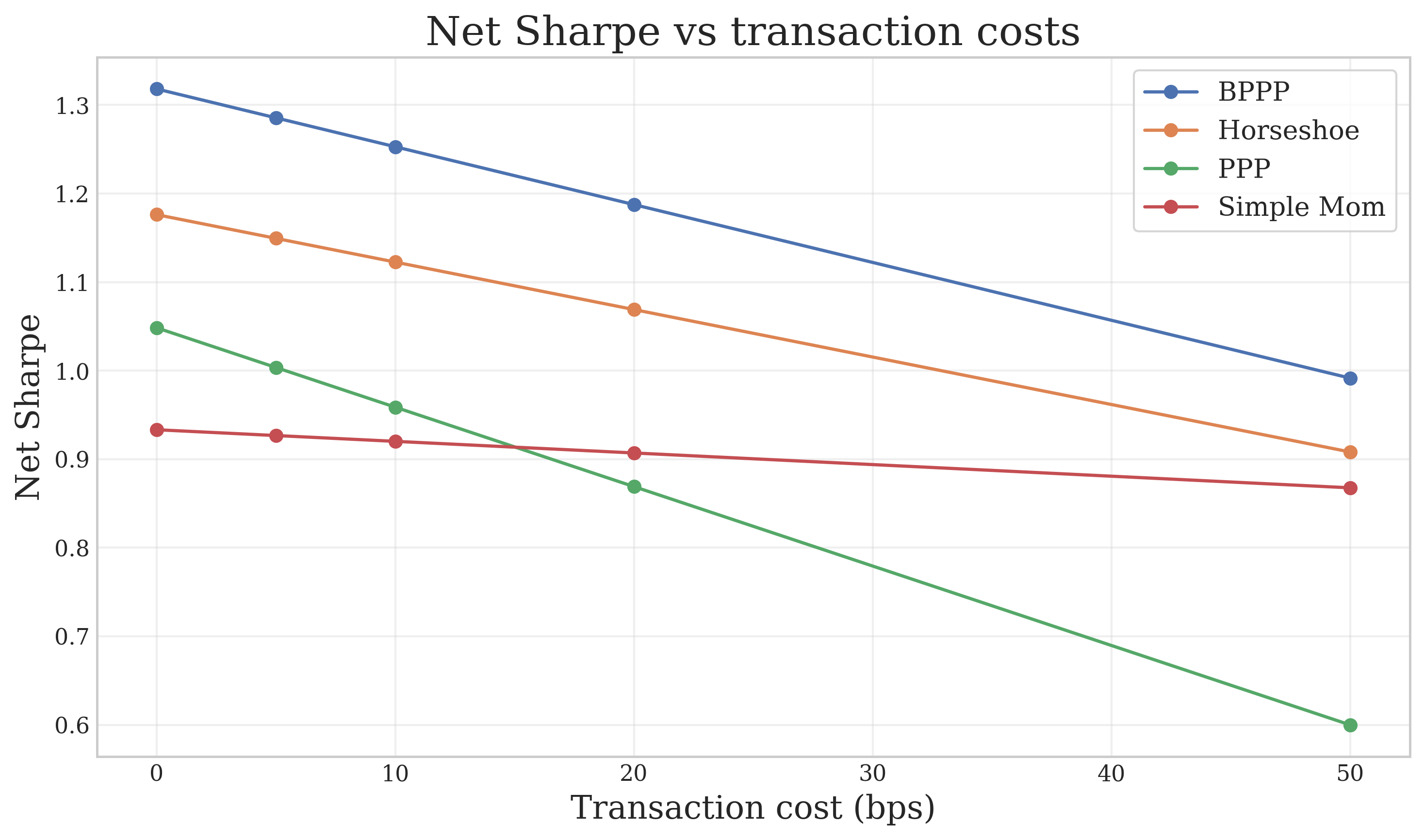}
\caption{Net Sharpe Ratio vs.\ Transaction Costs}
\label{fig:tc}
\vspace{0.5em}
\begin{minipage}{0.85\textwidth}\footnotesize
\textit{Notes:} Net Sharpe ratios as a function of one-way transaction costs (0--50 bps). 
\end{minipage}
\end{figure}

The turnover advantage of BPPP translates directly into a transaction-cost advantage that widens as frictions increase. Figure~\ref{fig:tc} plots net Sharpe ratios for all strategies against one-way transaction costs from 0 to 50 basis points.

At zero cost, BPPP leads PPP by 0.27 Sharpe points (1.32 versus 1.05). Far from narrowing with cost, the gap expands. At 10 bps the difference is 0.29 (1.25 versus 0.96), at 20 bps it is 0.32 (1.19 versus 0.87), at 50 bps it is 0.39 (0.99 versus 0.60). The BPPP--PPP gap is therefore strictly increasing in transaction costs, which is not mechanical. The posterior averaging that corrects overexposure simultaneously reduces the sensitivity of portfolio weights to noisy signal fluctuations, generating more persistent positions and lower rebalancing frequency. The cost advantage is a direct consequence of the same mechanism, reduced overreaction, that generates the gross Sharpe advantage.

For BPPP, the breakeven transaction cost at which the Sharpe falls below the market benchmark's 0.74 exceeds 50 basis points, i.e., above the full 0--50 bps grid we test. For the Horseshoe, which has even lower turnover (5.55), the breakeven is similarly high. 

\subsection{Subperiod Stability and Crisis Resilience}
\label{subsec:rob_crisis}

A second concern is whether results are over-reliant on a few fortuitous subperiods. We assess this along two dimensions. Decade-by-decade analysis and crisis-episode analysis.

\begin{figure}[ht]
\centering
\IncludeGraphicsMaybe[width=0.90\textwidth]{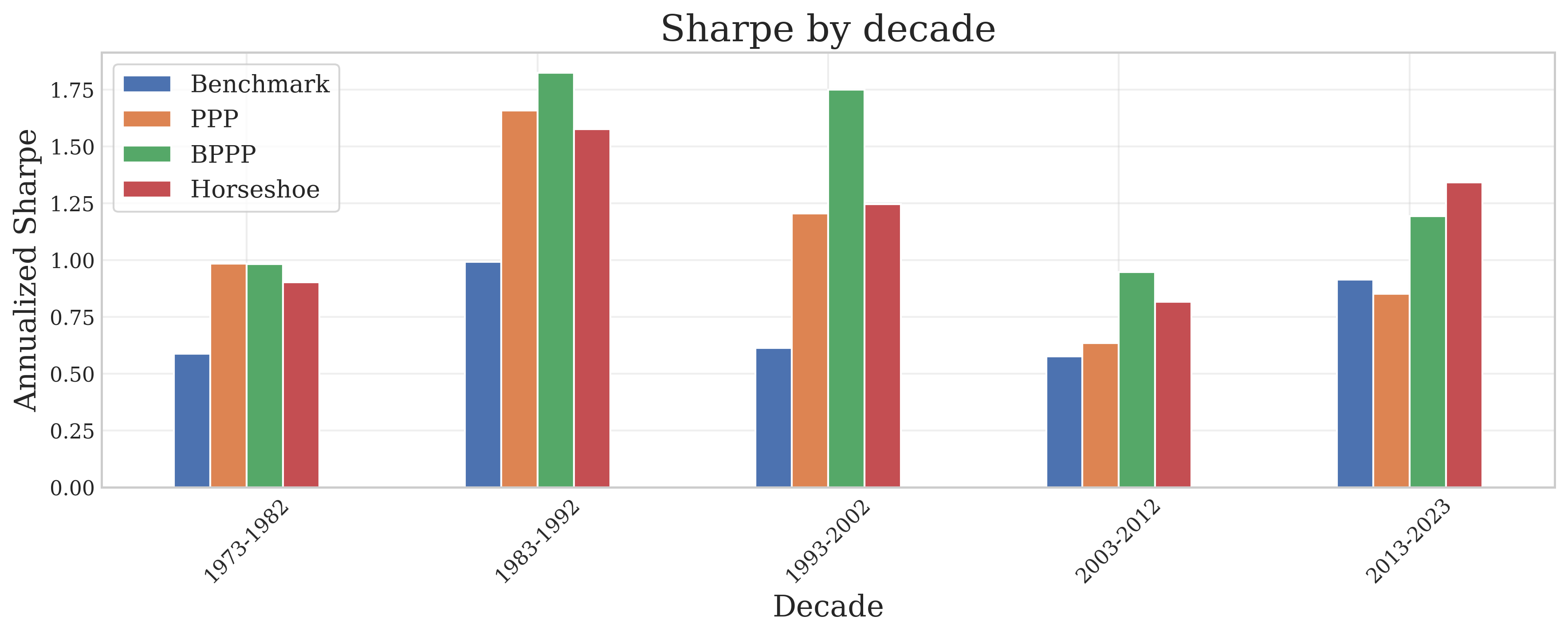}
\caption{Sharpe Ratios by Decade}
\label{fig:decade}
\vspace{0.5em}
\begin{minipage}{0.90\textwidth}\footnotesize
\textit{Notes:} Annualised Sharpe ratios by complete out-of-sample decade. BPPP outperforms PPP in four of five complete decades. First decade (1963--1972) excluded as it overlaps with the in-sample training window.
\end{minipage}
\end{figure}

Figure~\ref{fig:decade} reports Sharpe ratios by decade. BPPP outperforms PPP in four of the five complete decades. The first decade (1973--1982) is the sole exception, with BPPP Sharpe of 0.98 versus PPP's 0.99, a negligible margin of 0.003 Sharpe points consistent with sampling variation. In all subsequent decades the advantage is unambiguous. The equity bull market of the 1980s, the technology boom and bust of the 1990s, the post-GFC recovery of the 2000s, and the most recent decade. The advantage is largest in the 2003--2012 decade, spanning the Global Financial Crisis, precisely the period where Proposition~1 predicts the largest cost of ignoring $\Sigma_{\theta,\tau}$. Signals were anomalously large and volatile, PPP loaded heavily on them, and the result was the deep drawdown (-37.2\%) that BPPP avoided (-24.5\%). Across complete decades, BPPP outperforms the market benchmark in all five, while PPP underperforms the benchmark in the most recent decade (2013--2023), confirming that factor timing adds value on average and that the BPPP correction improves robustness over time.

\begin{figure}[ht]
\centering
\IncludeGraphicsMaybe[width=0.90\textwidth]{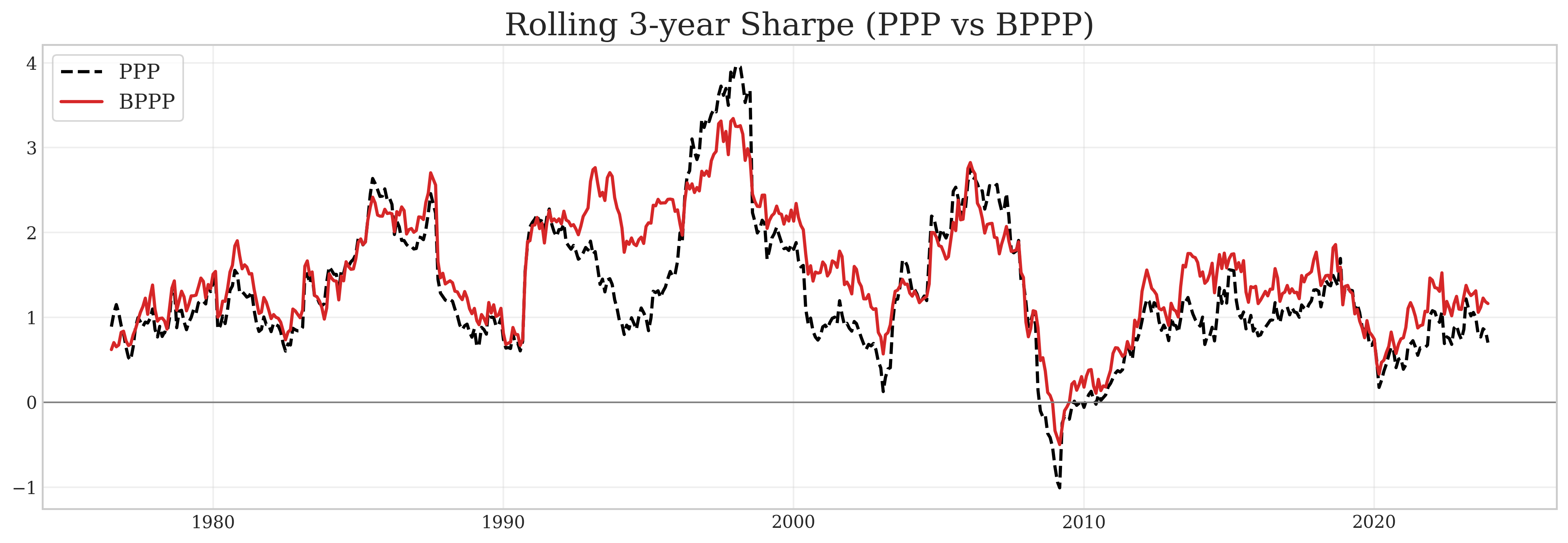}
\caption{Rolling 36-Month Sharpe Ratios}
\label{fig:rolling_sharpe}
\vspace{0.5em}
\begin{minipage}{0.90\textwidth}\footnotesize
\textit{Notes:} Rolling 36-month Sharpe ratios, 1973M8--2023M12. PPP (black dashed); BPPP (red solid). 
\end{minipage}
\end{figure}

Figure~\ref{fig:rolling_sharpe} shows rolling 36-month Sharpe ratios. In calm periods, BPPP and PPP track each other closely. When signals are moderate and estimation error has limited portfolio impact, posterior averaging introduces little change relative to the plug-in estimator. During stress episodes, however, the divergence is pronounced. Both the early-2000s technology correction and the 2008--2009 financial crisis show BPPP maintaining substantially higher rolling Sharpe than PPP. This asymmetry, better performance in bad times, near parity in good times, is evidence of the mechanism in Proposition~1. The Jensen's inequality gap scales with $z_\tau'\Sigma_{\theta,\tau}z_\tau$, which is largest when signal magnitudes are extreme, and extreme signal episodes disproportionately coincide with financial crises.

\begin{figure}[ht]
\centering
\IncludeGraphicsMaybe[width=0.90\textwidth]{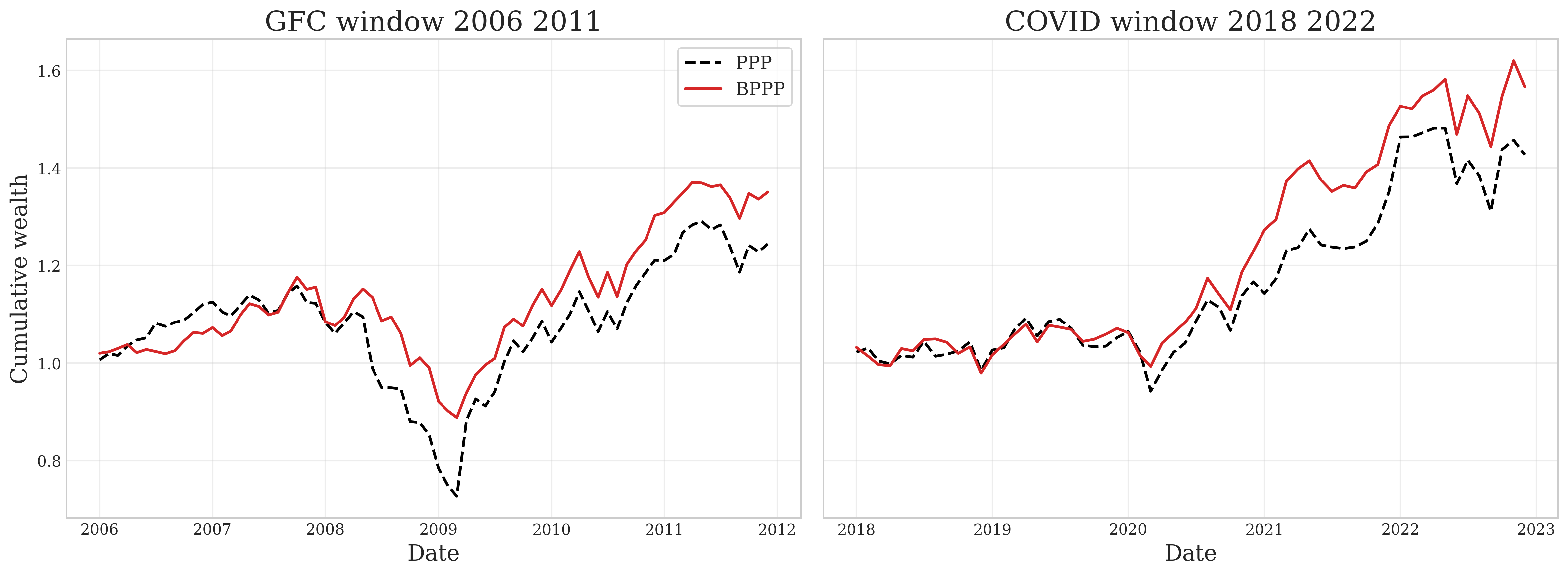}
\caption{Cumulative Wealth During Crisis Episodes}
\label{fig:crisis}
\vspace{0.5em}
\begin{minipage}{0.90\textwidth}\footnotesize
\textit{Notes:} Cumulative portfolio wealth during the GFC (left, 2006--2011) and COVID (right, 2018--2022). PPP (black dashed); BPPP (red solid). 
\end{minipage}
\end{figure}

Figure~\ref{fig:crisis} zooms into the two major crisis episodes. During the GFC window (2006--2011), BPPP achieves a Sharpe of 0.56 versus 0.33 for PPP, a 71\% relative improvement with a materially smaller drawdown. The GFC is a particularly demanding test because factor signal dispersion was unusually high in the pre-crisis period and PPP loaded aggressively on those signals. BPPP's posterior averaging compressed the tilts, preventing the losses that large leveraged factor positions incurred as the crisis unfolded. During the COVID shock (2018--2022), BPPP leads with 0.89 versus 0.66. The COVID episode differs in character. The signal dislocation was sharper but more transient. The BPPP advantage here reflects faster mean-reversion of portfolio weights after the initial shock, consistent with the reduced sensitivity to large individual signal realisations that posterior averaging produces.

The drawdown profile (Figure~\ref{fig:drawdown} in the appendix) confirms the picture at the path level. BPPP's maximum drawdown of -24.5\% is 12.7 percentage points smaller than PPP's -37.2\%, and the recovery from the 2008 trough is faster. The normal Q-Q plot in the same figure shows PPP's return distribution has materially heavier left tails than the normal, while BPPP's tails are near-Gaussian, consistent with the excess kurtosis of 1.39 versus 5.37 for PPP reported in Table~\ref{tab:performance}. Together, the crisis and drawdown evidence confirms the third implication of Proposition~1, that PPP systematically understates portfolio risk via the estimation-risk term in Equation~\eqref{totvar}, and the downside consequences are concentrated in exactly the episodes where estimation risk is most consequential.

\subsection{Prior Sensitivity and Dynamic Prior Mean}
\label{subsec:rob_prior}

The BPPP framework requires two specification choices. The target-tilt standard deviation $\delta$, which governs prior tightness, and whether the prior mean is set dynamically ($M_t = \hat\theta_{t-1}$) or held fixed at zero ($M_t = 0$), an alternative and legitimate choice. Table~\ref{tab:prior_sensitivity} reports the full sensitivity grid across $\delta \in \{0.35, 0.50, 0.70, 0.90\}$ with both prior-mean settings. The range spans from a tight prior ($\delta = 0.35$, $\nu \approx 0.0015$) to a loose one ($\delta = 0.90$, $\nu \approx 0.010$), a roughly sevenfold increase in prior variance.

\begin{table}[ht]
\centering
\caption{BPPP Prior Sensitivity Study}
\label{tab:prior_sensitivity}
\begin{threeparttable}
\small
\begin{tabular}{lcccccc}
\toprule
$\delta$ & Dynamic $M_t$ & Sharpe & CE (\%) & Turnover & Mean tilt norm & Mean $|\theta|$ \\
\midrule
0.35 & No & 1.147 & 8.205 & 3.76 & 0.567 & 0.0105 \\
0.35 & Yes  & 1.318 & 10.517 & 6.03 & 0.661 & 0.0263 \\
0.50 & No & 1.053 & 7.025 & 2.31 & 0.536 & 0.0137 \\
0.50 & Yes  & 1.313 & 9.850 & 5.24 & 0.640 & 0.0346 \\
0.70 & No & 1.032 & 6.645 & 1.63 & 0.544 & 0.0144 \\
0.70 & Yes  & 1.308 & 9.418 & 4.56 & 0.627 & 0.0418 \\
0.90 & No & 1.055 & 6.688 & 1.46 & 0.557 & 0.0146 \\
0.90 & Yes  & 1.288 & 9.050 & 4.21 & 0.626 & 0.0518 \\
\bottomrule
\end{tabular}
\begin{tablenotes}\footnotesize
\item \textit{Notes:} BPPP sensitivity across target-tilt standard deviation $\delta$ and dynamic-prior-mean setting. ``Dynamic $M_t$'' sets the prior mean to the previous period's MAP estimate. Turnover is annualised one-way. Mean $|\theta|$ and mean tilt norm are structural diagnostics.
\end{tablenotes}
\end{threeparttable}
\end{table}

\paragraph{Sensitivity to prior tightness ($\delta$).}
Across the static specification, Sharpe ratios range from 1.032 to 1.147 as $\delta$ varies from 0.70 to 0.35, a variation of less than 0.12 over a sevenfold change in prior variance. CE returns span 6.6\% to 8.2\%. The insensitivity is structural. As $\delta$ rises from 0.35 to 0.90, mean $|\theta|$ nearly doubles (0.010 to 0.015) and tilt norms also rise. Once the constraints bind regularly, loosening the prior buys coefficient freedom that the portfolio construction step absorbs without converting to proportionally larger active positions. This insensitivity is reassuring since within a broad range, the prior-tightness choice does not materially change the results.

\paragraph{The role of the dynamic prior mean.}
The dynamic prior mean ($M_t = \hat\theta_{t-1}$) has a first-order effect across every value of $\delta$. Switching from static to dynamic adds approximately 0.17--0.28 Sharpe points and 2.3--2.8 percentage points of annualised CE. At the baseline $\delta = 0.35$, the dynamic prior increases Sharpe from 1.147 to 1.318 and CE from 8.2\% to 10.5\%. This is the largest single performance differential in the sensitivity grid, larger than any choice of $\delta$ within either specification.

The economic mechanism is important to understand. When $M_t = \hat\theta_{t-1}$, the prior penalises deviations from the investor's recent coefficient estimate rather than from zero. This anchors each period's MAP to the previous solution, providing coefficient-level inertia that complements the posterior averaging over policy uncertainty. Specifically, it prevents the MAP from jumping to an extreme solution when a new data point arrives with anomalously large signals, because the prior cost of moving far from $\hat\theta_{t-1}$ is substantial. The result is lower within-period coefficient volatility and smoother portfolio dynamics over time.

The apparent paradox, that turnover is \emph{higher} under the dynamic prior (6.03) than the static one (3.76) at $\delta = 0.35$, resolves on inspection. The static prior ($M_t = 0$) strongly attracts the MAP toward the origin in every period, compressing all tilts toward the benchmark and limiting the model's ability to maintain persistent non-zero positions. The dynamic prior allows the MAP to track a slowly evolving signal environment, maintaining persistent tilts that accumulate across periods. These persistent tilts imply higher coefficient magnitude, higher tilt norms, and higher one-way turnover as positions are adjusted at each rebalancing date. The static specification's lower turnover therefore reflects over-shrinkage toward zero, a conservative stance that sacrifices the information in persistent signals, rather than efficient portfolio allocation. The CE gap of over 2 percentage points per annum documents the welfare cost of that conservatism.

\paragraph{Horseshoe prior specification.}
As discussed in Section~\ref{sec:results}, the horseshoe's underperformance relative to BPPP (Sharpe 1.18 versus 1.32) reflects a prior misalignment. The $\kappa$ distribution is concentrated near zero throughout the out-of-sample period, indicating that broad signal survival (not sparse selection) characterises the data. The horseshoe's sparsity assumption is at odds with the dense predictability structure documented in the signal-level analysis, and the Gaussian prior is the better-calibrated specification for this environment. We treat the prior comparison as empirical evidence on the structure of factor-timing predictability rather than a definitive finding about priors in general, settings with genuinely sparse predictability could favour the horseshoe.

\section{Conclusion}
\label{sec:conclusion}

Parametric Portfolio Policies offer a powerful and flexible framework for dynamic asset allocation, but ignore policy uncertainty. Our key proposition formalises the consequences and shows that, because portfolio weights are linear in policy parameters and utility is concave, PPP systematically overstates achievable expected utility by an amount proportional to parameter uncertainty and signal scale. The correction is not a heuristic. It follows from Jensen's inequality applied to the posterior distribution of policy parameters.

Bayesian Parametric Portfolio Policies implement a corrected decision rule by integrating expected utility over this posterior. The correction arises by default, requiring no additional modelling of returns, and adds negligible computational cost. The prior controls the scale of posterior uncertainty and the degree of overexposure correction. Tighter priors impose larger corrections, translating into lower turnover and more stable weights.

Our empirical results confirm each prediction of the theory. BPPP delivers a gross Sharpe ratio of 1.32 versus 1.05 for PPP and 0.74 for the market. The certainty-equivalent difference is 536 basis points (BPPP) versus 353 basis points (PPP), a 184 basis point premium that represents the empirical counterpart of the utility overstatement in our key proposition. The gap grows with risk aversion, is largest during crisis episodes, and remains strong across decades (BPPP leads PPP in four of five complete decades). At 50 basis points of transaction costs, BPPP retains a net Sharpe of 0.99 while PPP collapses to 0.60.

A further finding is that the structure of predictability in factor timing appears diffuse rather than sparse. Most of the 242 signals are not strongly shrunk and appear to contribute jointly in our sample, favouring uniform Gaussian shrinkage over the horseshoe's sparse-signal design. We treat this as robust empirical evidence for this setting, while leaving broader prior-selection comparisons to future work.

Several extensions are natural. Replacing the isotropic Gaussian prior with a structured prior that accounts for cross-signal correlations would allow for more targeted overexposure correction in directions of high uncertainty. Extending BPPP to individual equities would test scalability at much larger $K$. Sequential Bayesian updating across windows would allow the posterior to adapt to structural breaks in factor dynamics without restarting from scratch.

\clearpage
\doublespacing
\setlength\bibsep{0pt}
\bibliographystyle{apalike}
\bibliography{library}

\clearpage
\appendix
\renewcommand{\thefigure}{A\arabic{figure}}
\renewcommand{\thetable}{A\arabic{table}}
\setcounter{figure}{0}
\setcounter{table}{0}

\section{Proofs and Additional Details}
\label{sec:proofs}

\subsection*{Proof of Proposition~1}
\label{proof1}
\noindent We prove the two claims in turn. 

\medskip
\noindent\textbf{Part 1: Jensen's inequality bound (Equation~\ref{eq:jensen}).}

\medskip
\noindent Recall that portfolio weights are linear in $\theta$:
\[
  w_\tau(\theta) = w_b + \theta z_\tau,
\]
so the portfolio return $r_{p,\tau+1}(\theta) = w_\tau(\theta)'R_{\tau+1}$ is linear
in $\theta$ for any fixed $(z_\tau, R_{\tau+1})$. We define
\[
  G_\tau(\theta) = \mathbb{E}_\tau\!\left[U\!\left(w_\tau(\theta)'R_{\tau+1}\right)\right],
\]
where the expectation is taken over the distribution of $R_{\tau+1}$ conditional on
information at $\tau$. Because $r_{p,\tau+1}(\theta)$ is linear in $\theta$ and $U$ is strictly concave,
$G_\tau(\theta)$ is strictly concave in $\theta$. Thus, applying Jensen's inequality to the concave function $G_\tau(\cdot)$
and the posterior distribution $p(\theta\mid\mathcal{D}_{T_\tau})$ yields
\[
  \mathbb{E}_{\theta|\mathcal{D}_{T_\tau}}\!\left[G_\tau(\theta)\right]
  \;\leq\;
  G_\tau\!\left(\mathbb{E}_{\theta|\mathcal{D}_{T_\tau}}[\theta]\right)
  \;=\; G_\tau(m).
\]
Since $U$ is \emph{strictly} concave and $\Sigma_{\theta}\neq 0$, the inequality is strict
\[
  \mathbb{E}_{\theta|\mathcal{D}_{T_\tau}}\!\left[G_\tau(\theta)\right]
  \;<\; G_\tau(m). \qquad \blacksquare
\]

\medskip
\noindent\textbf{Part 2: Second-order approximation (Equation~\ref{eq:gap}).}

\medskip
\noindent Take a second-order Taylor expansion of $G_\tau(\theta)$ around
$\theta = m$ which returns
\[
  G_\tau(\theta)
  \;\approx\;
  G_\tau(m)
  + \nabla_{\theta} G_\tau(m)'\,\mathrm{vec}(\theta - m)
  + \tfrac{1}{2}\,\mathrm{vec}(\theta-m)'\,\nabla^2_{\theta} G_\tau(m)\,\mathrm{vec}(\theta-m).
\]
Taking expectations over the posterior $p(\theta\mid\mathcal{D}_T)$ yields
\begin{align*}
  \mathbb{E}_{\theta|\mathcal{D}_T}\!\left[G_\tau(\theta)\right]
  &\;\approx\;
  G_\tau(m)
  + \nabla_{\theta} G_\tau(m)'\,\underbrace{\mathbb{E}[\mathrm{vec}(\theta - m)]}_{=\,0}
  + \tfrac{1}{2}\,\mathbb{E}\!\left[\mathrm{vec}(\theta-m)'\,\nabla^2_{\theta} G_\tau(m)
    \,\mathrm{vec}(\theta-m)\right]\\[4pt]
  &\;=\; G_\tau(m)
    + \tfrac{1}{2}\operatorname{tr}\!\left(\nabla^2_{\theta} G_\tau(m)\,\Sigma_{\theta}\right),
\end{align*}
where the second step uses the trace identity
$\mathbb{E}[v'Av] = \operatorname{tr}(A\,\mathbb{E}[vv'])$ together with
$\mathbb{E}[\mathrm{vec}(\theta-m)\,\mathrm{vec}(\theta-m)'] = \Sigma_{\theta}$.
Rearranging yields
\[
  G_\tau(m) - \mathbb{E}_{\theta|\mathcal{D}_T}\!\left[G_\tau(\theta)\right]
  \;\approx\;
  -\tfrac{1}{2}\operatorname{tr}\!\left(\nabla^2_{\theta} G_\tau(m)\,\Sigma_{\theta}\right). \qquad \blacksquare
\]

\medskip
\subsection*{Proof of the Variance Decomposition}

\begin{proof}
The portfolio return is
$r_{p,\tau+1}(\theta) = \left(w_b + \theta z_\tau\right)'R_{\tau+1}$.
Treat $\theta$ as random with posterior mean $m$ and covariance $\Sigma_{\theta}$,
independent of the future return $R_{\tau+1}$ conditional on the information set
$\mathcal{D}_T$.\footnote{This conditional independence holds because $R_{\tau+1}$
is a future realisation not contained in the estimation sample $\mathcal{D}_T$.}
Apply the law of total variance with respect to the joint processes
$(\theta, R_{\tau+1})$
\[
  \operatorname{Var}(r_{p,\tau+1})
  \;=\;
  \mathbb{E}_{\theta}\!\left[\operatorname{Var}_{R}\!\left(r_{p,\tau+1}\mid\theta\right)\right]
  \;+\;
  \operatorname{Var}_{\theta}\!\left(\mathbb{E}_{R}\!\left[r_{p,\tau+1}\mid\theta\right]\right).
\]

The first term is
\begin{align*}
  \mathbb{E}_{\theta}\!\left[\operatorname{Var}_R(r_{p,\tau+1}\mid\theta)\right]
  &= \mathbb{E}_{\theta}\!\left[\operatorname{Var}_R\!\left(\left(w_b+\theta z_\tau\right)'R_{\tau+1}\right)\right]\\
  &= \mathbb{E}_{\theta}\!\left[\left(w_b+\theta z_\tau\right)'\Omega\left(w_b+\theta z_\tau\right)\right],
\end{align*}
where $\Omega = \operatorname{Var}(R_{\tau+1})$. This is the market-risk term in Equation~\eqref{totvar}.

\medskip
For the second term (estimation risk), write
\begin{align*}
  \mathbb{E}_R[r_{p,\tau+1}\mid\theta]
  &= \left(w_b + \theta z_\tau\right)'\mu_{\tau+1},
  \quad \mu_{\tau+1} = \mathbb{E}[R_{\tau+1}].
\end{align*}
This is linear in $\theta$, so
\begin{align*}
  \operatorname{Var}_{\theta}\!\left(\mathbb{E}_R[r_{p,\tau+1}\mid\theta]\right)
  &= \operatorname{Var}_{\theta}\!\left(z_\tau'\theta'\mu_{\tau+1}\right) \\
  &= \mu_{\tau+1}'\operatorname{Var}(\theta z_\tau)\,\mu_{\tau+1} \\
  &= \mu_{\tau+1}'\!\left(z_\tau'\Sigma_{\theta} z_\tau\right)I_K\,\mu_{\tau+1} \\
  &= z_\tau'\Sigma_{\theta} z_\tau\cdot\|\mu_{\tau+1}\|^2.
\end{align*}
The third equality uses the separable covariance structure assumed throughout i.e. $\operatorname{Var}(\theta z_\tau)=\left(z_\tau'\Sigma_{\theta}z_\tau\right)I_K$.
This yields the exact estimation-risk term used in Equation~\eqref{totvar}
\[
  \operatorname{Var}_{\theta}\!\left(\mathbb{E}_R[r_{p,\tau+1}\mid\theta]\right)
  = z_\tau'\Sigma_{\theta} z_\tau \cdot \|\mu_{\tau+1}\|^2.
\]

\noindent Combining both terms
\[
  \operatorname{Var}(r_{p,\tau+1})
  = \mathbb{E}_{\theta}\!\left[\operatorname{Var}_R(r_{p,\tau+1}\mid\theta)\right]
  + z_\tau'\Sigma_{\theta} z_\tau\cdot\|\mu_{\tau+1}\|^2.
\]
The first term is market risk integrated over policy uncertainty. The second term is additional variance induced by uncertainty in $\theta$, proportional to the quadratic form
$z_\tau'\Sigma_{\theta} z_\tau$. PPP sets $\Sigma_{\theta} = 0$ and therefore ignores
this term entirely, systematically understating portfolio risk whenever
$\Sigma_{\theta} \neq 0$. 
\end{proof}

\newpage
\section{Estimation}
\label{sec:estimation}

We solve for $\hat\theta_t$ via maximum a posteriori (MAP) estimation
\begin{equation}
	\label{eq:map}
	\hat\theta_t = \arg\max_\theta \left\{
	\sum_{s=1}^{T} U\!\left(w_s(\theta)'R_{s+1}\right)
	- \frac{1}{2\nu}\|\theta - M_t\|_F^2
	\right\}.
\end{equation}
The penalty term in Equation~\eqref{eq:map} is the log-prior contribution. It shrinks $\hat\theta_t$ toward the prior mean $M_t$ with strength $1/\nu$, providing regularisation that reduces estimation variance. The PPP benchmark is the limiting case $\nu\to\infty$, in which the prior is uninformative and estimation risk is ignored as discussed in Proposition~1.

The portfolio return at each in-sample date $s$ is $r_{p,s}=w_s(\theta)'R_{s+1}$, where $w_s(\theta)=\Pi(w_b+\theta z_s)$ and $\Pi(\cdot)$ bounds each position to the feasible set $\mathcal{W}$. Because $\Pi$ is piecewise linear and differentiable, we apply the straight-through estimator and pass gradients given by
\begin{equation}
	\label{eq:map_gradient}
	\frac{\partial\mathcal{L}_t}{\partial\theta_{k,\ell}}
	= \sum_{s=1}^{T_t} (1+r_{p,s})^{-\gamma}\, R_{s+1,k}\, z_{s,\ell}\,
	\mathbf{1}\!\left\{k\in\mathcal{A}_s\right\}
	\;-\;\frac{\theta_{k,\ell}-[M_t]_{k,\ell}}{\nu},
\end{equation}
where $\mathcal{A}_s\subseteq\{1,\ldots,K\}$ denotes the set of assets at date $s$ and $(1+r_{p,s})^{-\gamma}=U'(r_{p,s})$ is the CRRA marginal utility. The full gradient matrix is supplied analytically to standard L-BFGS-B optimization routine, which handles box constraints via projected quasi-Newton steps.

Given $\hat\theta_t$, we form a diagonal Gaussian approximation to the posterior. The diagonal of the Hessian of the negative log-posterior at $\hat\theta_t$ is
\begin{equation}
	\label{eq:hessian_diag}
	[H_t]_{k,\ell}
	= \sum_{s=1}^{T_t} \gamma\,(1+r_{p,s})^{-(\gamma+1)}\,
	R_{s+1,k}^2\, z_{s,\ell}^2\,
	\mathbf{1}\!\left\{k\in\mathcal{A}_s\right\}
	\;+\;\frac{1}{\nu},
\end{equation}
with posterior variance $v_{k,\ell}=[H_t]_{k,\ell}^{-1}$. We draw $M$ perturbations and average the implied portfolio weights as follows
\begin{equation}
	\label{eq:weight_avg}
	w_\tau^{\text{BPPP}} = \frac{1}{M}\sum_{m=1}^M w_\tau\!\left(\hat\theta_t + \epsilon^{(m)}\right),
	\qquad \epsilon^{(m)} \sim \mathcal{N}\!\left(0,\operatorname{diag}(v_t)\right),
\end{equation}
where $v_t$ collects all $v_{k,\ell}$ into a matrix of the same shape as $\theta$. The Laplace approximation requires one MAP optimisation per rebalancing date, followed by $M$ inexpensive Gaussian draws and linear weight evaluations, keeping the exercise feasible in the high-dimensional setting. Full Bayesian inference via Markov-Chain Monte Carlo is computationally prohibitive when $K\times L$ parameters are updated monthly over a 50-year sample.

Next, weights are projected onto the feasible set $\mathcal{W}$ after averaging. All strategies share identical constraints. Individual positions are capped at $\pm 60\%$ (long or short any single factor) and gross exposure is bounded at 200\% of capital. These constraints are the same across PPP and BPPP, so any performance difference reflects estimation-risk correction alone, not differential access to leverage. Both strategies rebalance monthly when conducting the out-of-sample exercise.

The out-of-sample evaluation uses an expanding estimation window with an initial length of $T_0=120$ months. At each subsequent rebalancing date $t$, all available history up to month $t$ is used for MAP estimation, with the previous period's $\hat\theta_{t-1}$ serving as a warm start for L-BFGS-B. The prior variance $\nu_t$ follows Equation~\eqref{eq:prior_var}, growing with the window length so that the prior becomes progressively less restrictive as evidence accumulates relative to the signal dimension $L$. The prior mean is set to $M_t = \hat\theta_{t-1}$ throughout (Section~\ref{subsec:prior}), so that regularisation penalises deviations from the previous period's policy rather than from zero. This dynamic prior mean implements Bayesian shrinkage toward the investor's recent policy experience, providing a natural form of continuity in portfolio construction without hard constraints on turnover.

\subsection{Estimation with a Horseshoe Prior}
\label{subsec:hs_impl}

Estimation alternates MAP updates over $(\theta,\sigma^2,\tau,\{\lambda_{\ell,k}\})$, using the same portfolio objective and constraint pipeline as PPP and BPPP. The regularised local variance of the horseshoe prior,
\[
\tilde\lambda_{\ell,k}^2
=\frac{c^2\lambda_{\ell,k}^2\tau^2}{c^2+\lambda_{\ell,k}^2\tau^2},
\]
alternatives between harder shrinkage when the product $\lambda_{\ell,k}\tau$ is small and the slab when it is large. Conditional on the current scales, the $\theta$-step minimises the negative log-posterior
\begin{equation}
	\hat\theta_t
	= \arg\max_{\theta}
	\left\{
	\,\bar U(\theta)
	-\frac{1}{2T}\sum_{k,\ell}
	\frac{(\theta_{k,\ell}-M_{k,\ell})^2}{\tilde\lambda_{\ell,k}^2}
	\right\},
\end{equation}
where $\bar U(\theta)$ is the sample mean CRRA utility and $M_{k,\ell}$ is the prior mean. This step is solved with analytic gradients. Next, three quantities are updated. The residual scale $\sigma^2$ is the mean squared residual of the policy-scaled linear predictor against realised returns. The Piironen--Vehtari target,
\[
\tau_{\mathrm{PV}}
= \frac{p_0}{L-p_0}\frac{\sigma}{\sqrt{T}},
\]
translates the prior belief that roughly $p_0$ of the $L$ signals carry genuine predictive content into a global shrinkage level. The current $\tau$ is blended toward this target,
\[
\tau \;\leftarrow\; \rho\,\tau_{\text{old}}+(1-\rho)\,\tau_{\mathrm{PV}},
\]
and the local scales are updated as
\[
\lambda_{\ell,k}^2 \;\leftarrow\; \frac{|\theta_{k,\ell}-M_{k,\ell}|}{\sigma\tau}+1,
\]
so signals with large deviations from the prior mean receive less shrinkage. All updates include minor numerical safeguards for stability. Convergence is declared when changes in $\theta$, $\tau$, and $\lambda$ all fall below a tolerance.

Prediction follows the same Monte-Carlo averaging as BPPP. Each coefficient is perturbed by a draw from the diagonal Laplace posterior around the MAP,
\[
\epsilon_{k,\ell}^{(m)}\sim\mathcal{N}\!\left(0,\,v_{k,\ell}\right),\qquad
v_{k,\ell}=\left(\frac{1}{K^2}\sum_{s}c_s\,r_{s,k}^2\,z_{s,\ell}^2+\frac{1}{\tilde\lambda_{\ell,k}^2}\right)^{-1},
\]
where $c_s=-U''(1+r_{p,s})$ is the CRRA curvature at the MAP portfolio return. Weights are then averaged as in Equation~\eqref{eq:weight_avg}. The posterior shrinkage factor for each signal,
\[
\kappa_{\ell,k}
=\frac{1}{1+\tilde\lambda_{\ell,k}^2\,\|z_\ell\|^2/\sigma^2},
\]
measures how far the posterior is from the prior mean.
\newpage
\section{Additional Tables and Figures}
\label{sec:additionalresults}

\begin{figure}[ht]
\centering
\IncludeGraphicsMaybe[width=0.90\textwidth]{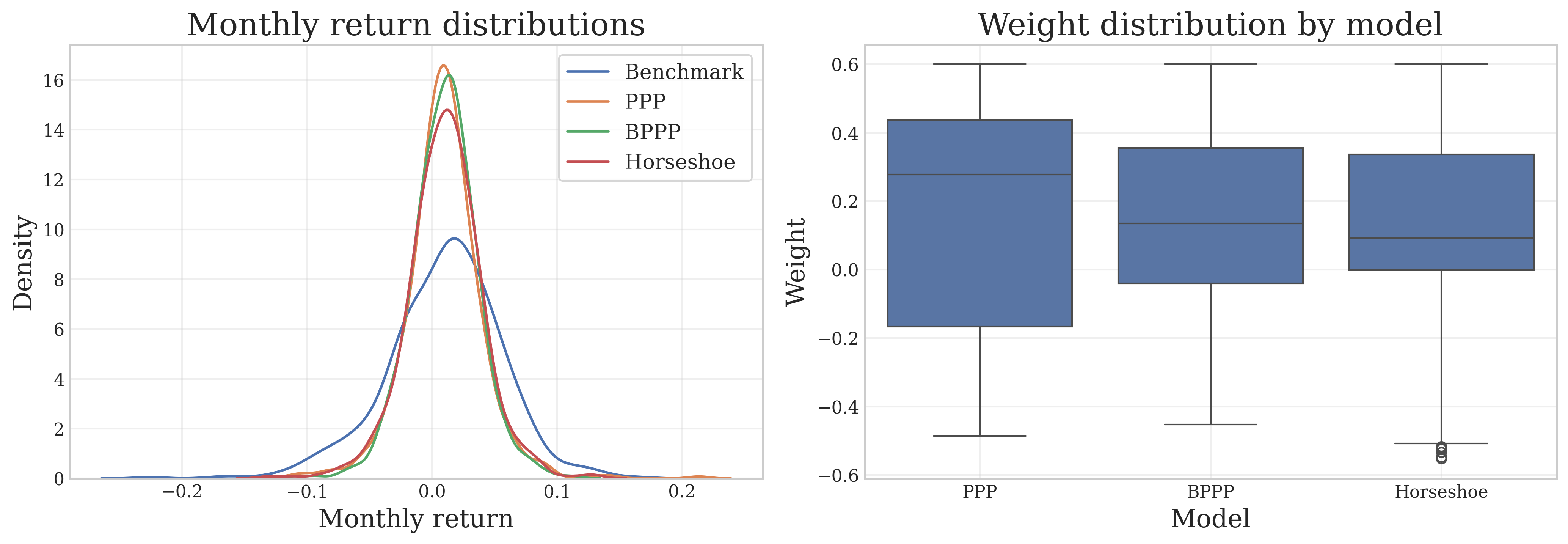}
\caption{Return and Weight Distributions by Model}
\label{fig:ppc}
\vspace{0.5em}
\begin{minipage}{0.90\textwidth}\footnotesize
\textit{Notes:} Left: kernel density estimates of monthly out-of-sample returns. Right: boxplots of factor weights across the out-of-sample period. BPPP produces a narrower weight distribution and a more symmetric, lighter-tailed return distribution than PPP, consistent with the overexposure correction of Proposition~1.
\end{minipage}
\end{figure}

\begin{figure}[ht]
\centering
\IncludeGraphicsMaybe[width=0.80\textwidth]{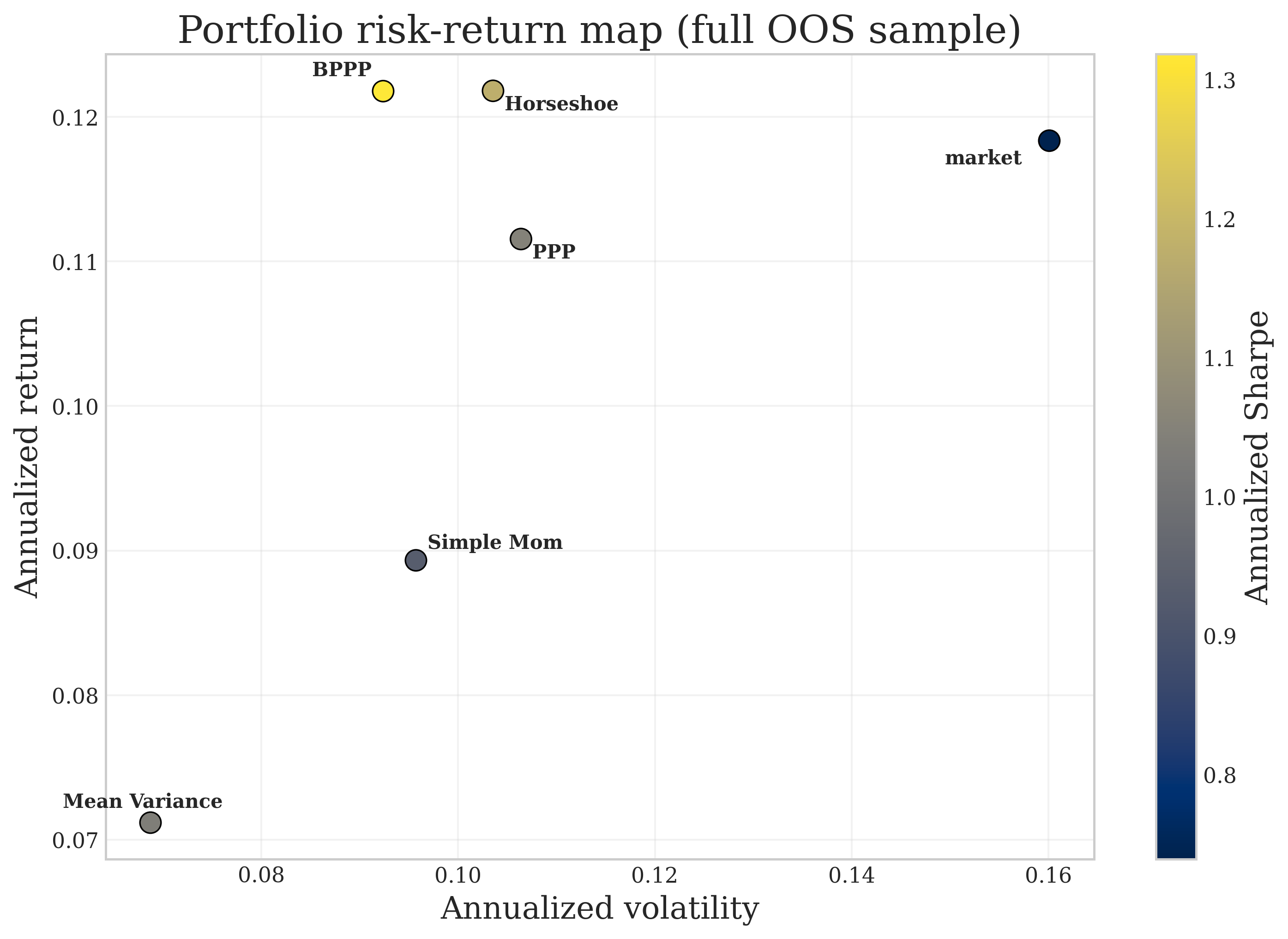}
\caption{Portfolio Risk-Return Map (Full Out-of-Sample Period)}
\label{fig:frontier}
\vspace{0.5em}
\begin{minipage}{0.90\textwidth}\footnotesize
\textit{Notes:} Annualised volatility--return space, 1973--2023. Colour encodes the Sharpe ratio. 
\end{minipage}
\end{figure}

\begin{figure}[ht]
\centering
\IncludeGraphicsMaybe[width=0.80\textwidth]{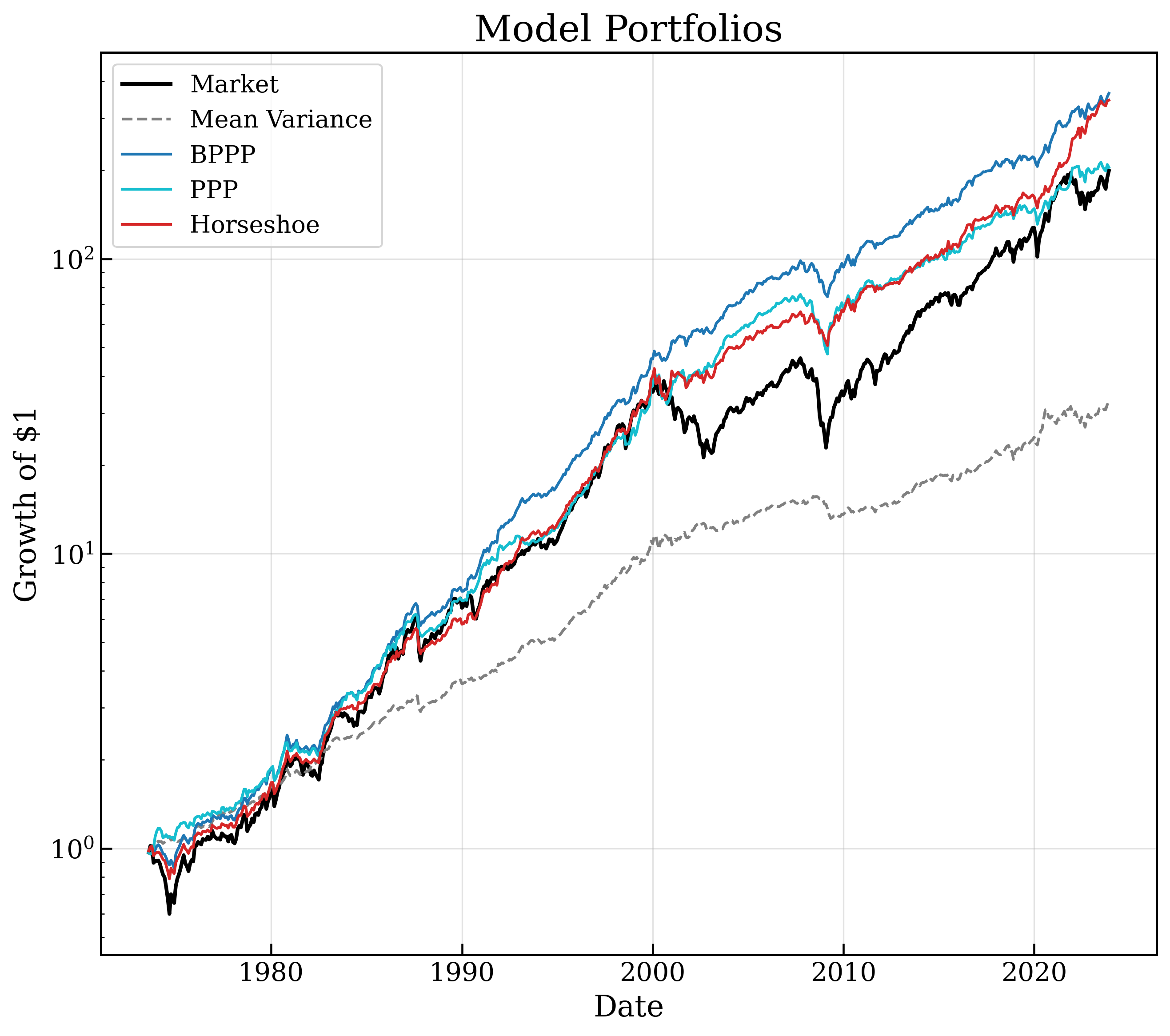}
\caption{Cumulative Returns by Model}
\label{fig:cumret}
\vspace{0.5em}
\begin{minipage}{0.90\textwidth}\footnotesize
\textit{Notes:} Cumulative portfolio wealth, expanding window, monthly rebalancing, initial training 120 months.
\end{minipage}
\end{figure}


\begin{figure}[ht]
\centering
\IncludeGraphicsMaybe[width=0.90\textwidth]{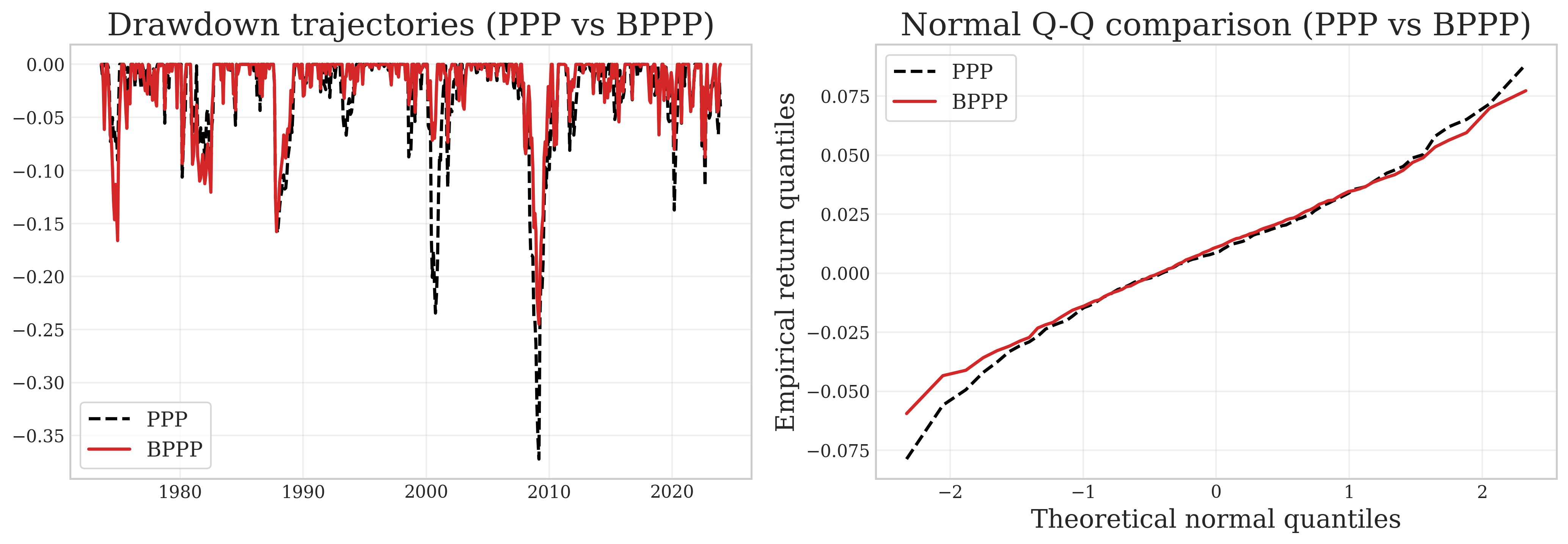}
\caption{Drawdown and Normal Q-Q: PPP vs.\ BPPP}
\label{fig:drawdown}
\vspace{0.5em}
\begin{minipage}{0.90\textwidth}\footnotesize
\textit{Notes:} Left: drawdown from peak wealth. Right: normal Q-Q of monthly returns. 
\end{minipage}
\end{figure}

\begin{figure}[ht]
\centering
\IncludeGraphicsMaybe[width=1\textwidth]{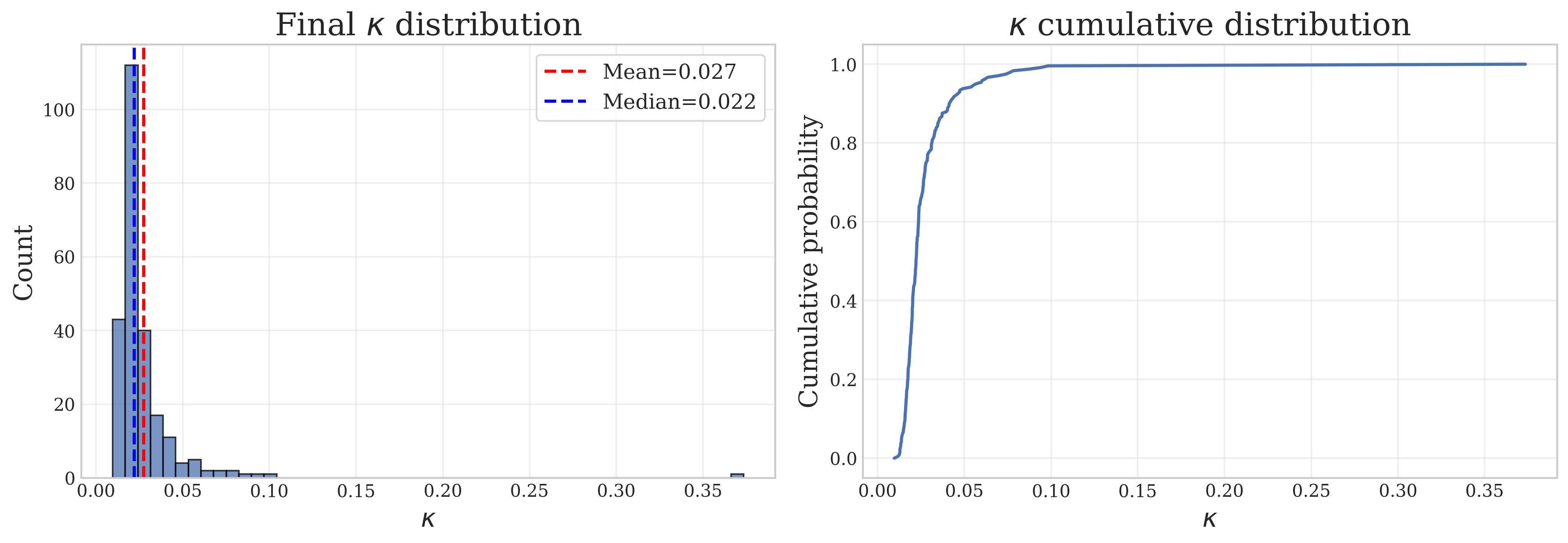}
\caption{Distribution of Horseshoe Shrinkage Factors}
\label{fig:kappa_dist}
\vspace{0.5em}
\begin{minipage}{0.80\textwidth}\footnotesize
\end{minipage}
\end{figure}

\newpage
\begin{figure}[h]
\centering
\IncludeGraphicsMaybe[width=0.92\textwidth]{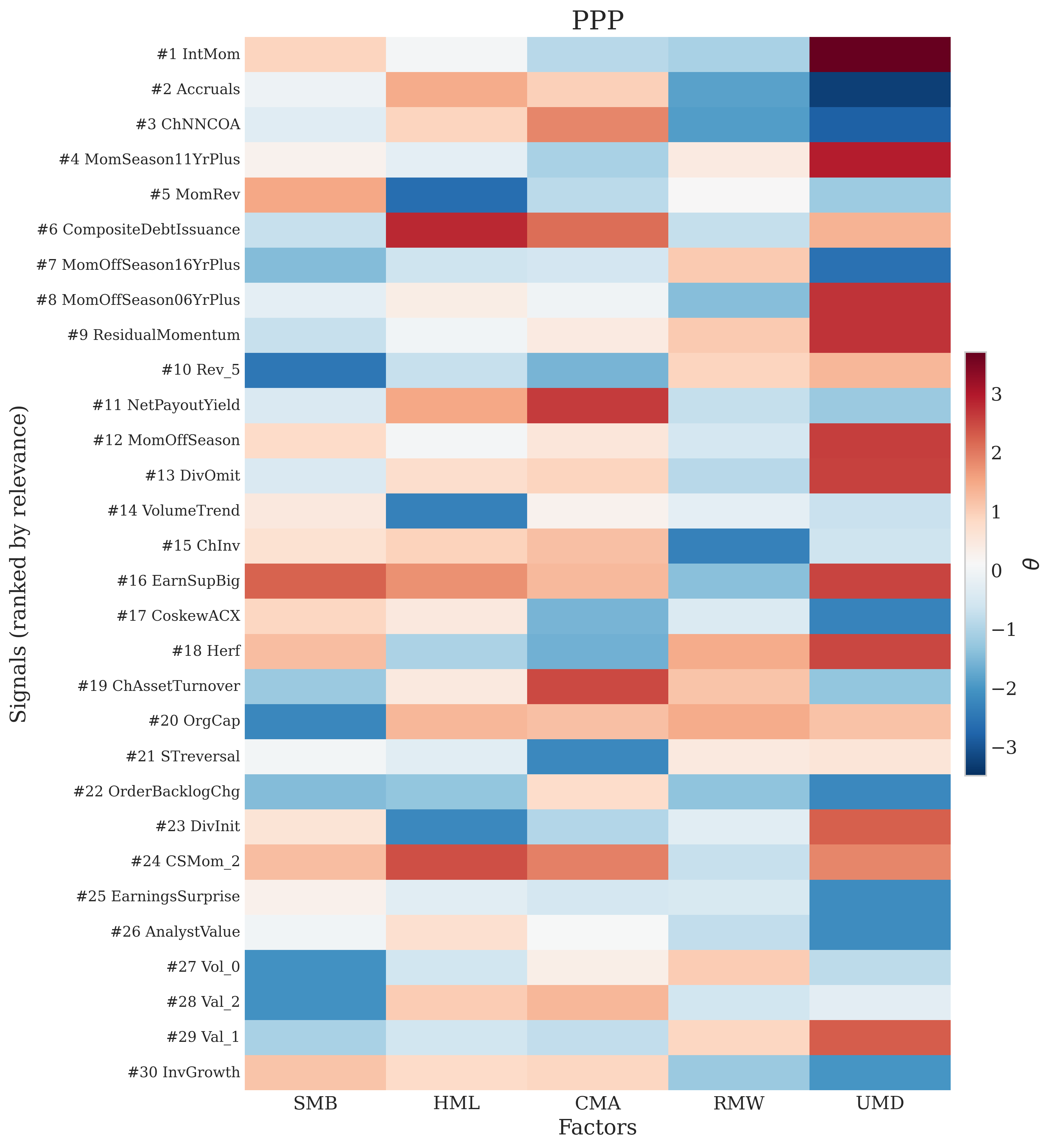}
\caption{Top Signal Coefficients Across Models (I)}
\label{fig:top20_ppp}
\vspace{0.5em}
\begin{minipage}{0.92\textwidth}\footnotesize
\textit{Notes:} Heatmap of the largest signal interactions by absolute coefficient across PPP, BPPP, and Horseshoe.
\end{minipage}
\end{figure}

\newpage
\begin{figure}[h]
	\centering
	\IncludeGraphicsMaybe[width=0.92\textwidth]{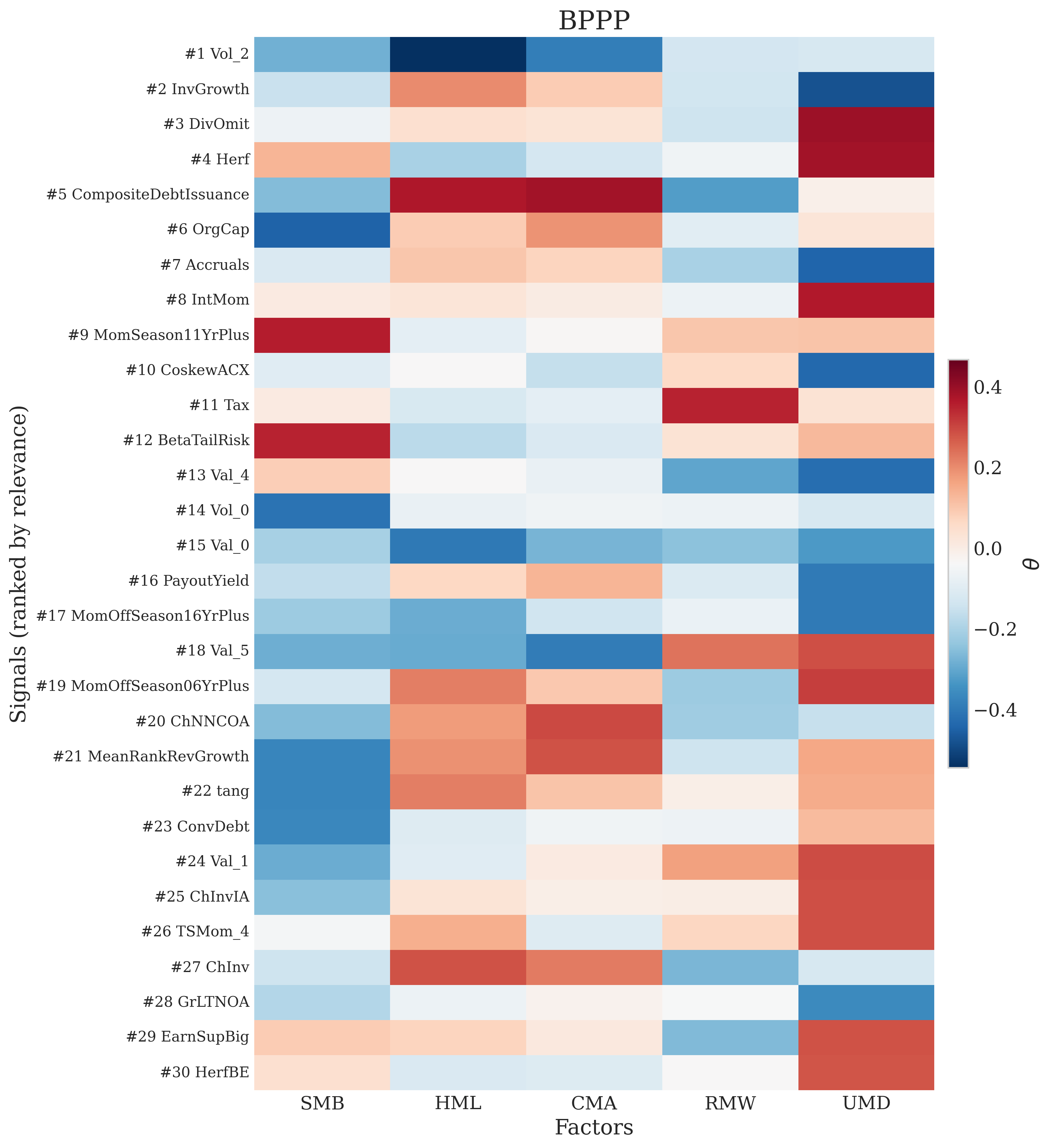}
	\caption{Top Signal Coefficients Across Models (II)}
	\label{fig:top20_bppp}
	\vspace{0.5em}
	\begin{minipage}{0.92\textwidth}\footnotesize
		\textit{Notes:} Heatmap of the largest signal coefficients by absolute coefficient across PPP, BPPP, and Horseshoe.
	\end{minipage}
\end{figure}

\newpage
\begin{figure}[h]
	\centering
	\IncludeGraphicsMaybe[width=0.92\textwidth]{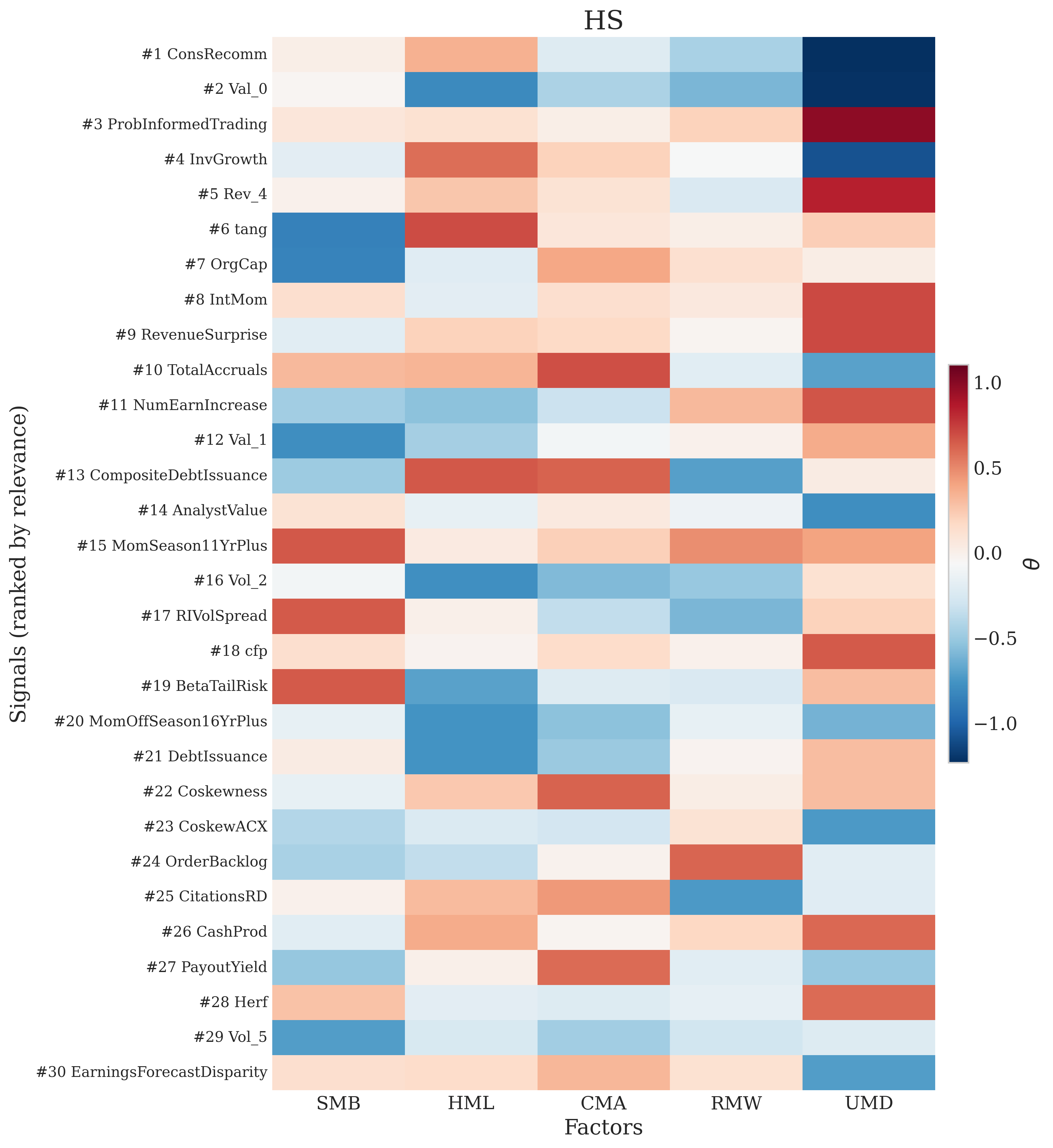}
	\caption{Top Signal Coefficients Across Models (III)}
	\label{fig:top20_hs}
	\vspace{0.5em}
	\begin{minipage}{0.92\textwidth}\footnotesize
		\textit{Notes:} Heatmap of the largest signal coefficients by absolute coefficient across PPP, BPPP, and Horseshoe.
	\end{minipage}
\end{figure}

\newpage
\begin{figure}[h]
\centering
\IncludeGraphicsMaybe[width=0.80\textwidth]{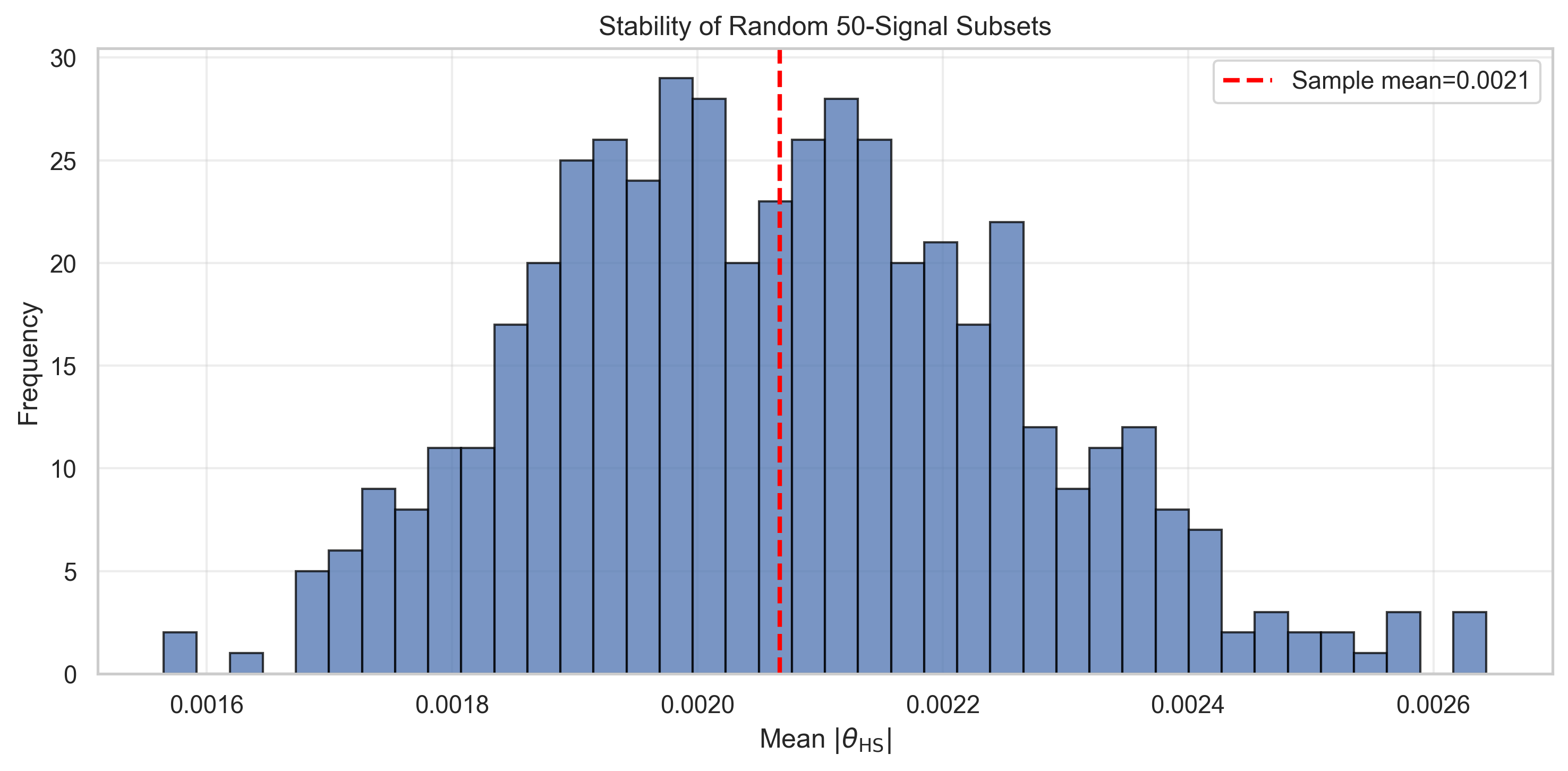}
\caption{Signal Subset Stability: Bootstrap Distribution of Mean Horseshoe Coefficients}
\label{fig:bootstrap}
\vspace{0.5em}
\begin{minipage}{0.80\textwidth}\footnotesize
\textit{Notes:} Distribution of mean absolute horseshoe coefficient across 500 draws of 50 signals from the full set of 242.
\end{minipage}
\end{figure}


\newpage

\begin{table}[ht]
\centering
\caption{Economic Significance: Certainty Equivalents and Performance Fees}
\label{tab:econ}
\begin{threeparttable}
\small
\begin{tabular}{lccc}
\toprule
Portfolio & CE annual (\%) & CE vs.\ Market (\%) & Perf.\ Fee (bp) \\
\midrule
BPPP       & 10.52 &  5.36 &  500 \\
PPP        &  8.68 &  3.53 &  331 \\
Horseshoe  &  9.90 &  4.74 &  443 \\
Mean-Var   &  6.09 &  0.93 &   89 \\
Benchmark  &  5.15 &  ---  &  --- \\
\bottomrule
\end{tabular}
\begin{tablenotes}\footnotesize
\item \textit{Notes:} CE returns under exact CRRA utility, $\gamma=5$. Performance fee: annual bp a benchmark investor would pay to switch to the given strategy.
\end{tablenotes}
\end{threeparttable}
\end{table}


\begin{table}[ht]
	\centering
	\caption{Prior Calibration Under Alternative Target Tilt Standard Deviations}
	\label{tab:prior_scale}
	\begin{threeparttable}
		\small
		\begin{tabular}{ccc}
			\toprule
			$\delta$ & $\sigma_\theta$ & $\nu$ \\
			\midrule
			0.20 & 0.0129 & 0.0005 \\
			0.35 & 0.0225 & 0.0015 \\
			0.50 & 0.0321 & 0.0031 \\
			\bottomrule
		\end{tabular}
		
		\vspace{0.3em}
		
		\begin{minipage}{\textwidth}
			\begin{tablenotes}[flushleft]\footnotesize
				\item \textit{Notes:} $\sigma_\theta = \delta/\sqrt{L}$ and $\nu = \sigma_\theta^2\cdot\max(T/L,1)$ with $L=242$. Baseline: $\delta=0.35$.
			\end{tablenotes}
		\end{minipage}
		
	\end{threeparttable}
\end{table}

\clearpage
\begingroup
\tiny
\setlength{\tabcolsep}{2pt}
\renewcommand{\arraystretch}{0.84}
\captionsetup{type=table}
\begin{center}
\captionof{table}{Subset of Signals Referenced in the Paper}\label{tab:signal_universe_referenced}
\begin{tabular}{@{}p{0.16\textwidth}p{0.33\textwidth}p{0.16\textwidth}p{0.33\textwidth}@{}}
\toprule
Signal & Description & Signal & Description \\
\midrule
Accruals & Accruals & NumEarnIncrease & Earnings-increase streak length \\
AnalystValue & Analyst value & OrgCap & Organizational capital \\
BetaTailRisk & Tail-risk beta & PayoutYield & Payout yield \\
ChAssetTurnover & Change in asset turnover & ProbInformedTrading & Probability of informed trading \\
ChInv & Inventory growth & RIVolSpread & Realized-minus-implied volatility spread \\
ChNNCOA & Change in net noncurrent operating assets & ResidualMomentum & FF3-residual momentum \\
CompositeDebtIssuance & Composite debt issuance & RevenueSurprise & Revenue surprise \\
ConsRecomm & Consensus recommendation & Tax & Taxable income to income \\
CoskewACX & Daily-return coskewness & TotalAccruals & Total accruals \\
DivOmit & Dividend omission & VolumeTrend & Volume trend \\
EarnSupBig & Earnings surprise of large firms & cfp & Operating cash flow to price \\
Herf & Industry concentration (sales Herfindahl) & tang & Tangibility \\
IntMom & Intermediate momentum & Val\_0 & Valuation gap, Mkt \\
InvGrowth & Investment growth & Val\_1 & Valuation gap, SMB \\
MomOffSeason & Off-season long-horizon reversal & Val\_4 & Valuation gap, RMW \\
MomOffSeason06YrPlus & Off-season reversal, yrs 6--10 & Val\_5 & Valuation gap, UMD \\
MomOffSeason16YrPlus & Off-season reversal, yrs 16--20 & Rev\_4 & Reversal signal, RMW \\
MomRev & Momentum/reversal composite & Rev\_5 & Reversal signal, UMD \\
MomSeason11YrPlus & Return seasonality, yrs 11--15 & Vol\_0 & Volatility-timing signal, Mkt \\
NetPayoutYield & Net payout yield & Vol\_2 & Volatility-timing signal, HML \\
\bottomrule
\end{tabular}
\end{center}

\vspace{0.2em}
\begin{minipage}{0.98\textwidth}
\footnotesize
\textit{Notes:} The 40 signals listed are the subset of the total 242 signals used for estimation that are shown in the Tables and Figures throughout, those referenced in signal-level graphs (top-20 signal-factor interactions across PPP, BPPP, and Horseshoe). Data are monthly. Signals are standardized in expanding windows using only information available at each date. Estimation sample is 1963M7--2023M12; out-of-sample evaluation is 1973M8--2023M12 after a 120-month initial window. Signals not in the Val/Rev/Vol families are from \citet{ChenZimmermann2022}. Factor-timing signals (Val/Rev/Vol) follow \citet{Haddad2020} and are implemented as: for factor $k$, $\mathrm{Val}_{k,t}=-\log\!\left(C_{k,t}/\exp\!\left[\frac{1}{60}\sum_{j=0}^{59}\log(C_{k,t-j})\right]\right)$, $\mathrm{Rev}_{k,t}=-(r_{k,t-1}-\frac{1}{36}\sum_{j=0}^{35}r_{k,t-j})$, and $\mathrm{Vol}_{k,t}=-\hat{\sigma}^{(12)}_{k,t}$ with $\hat{\sigma}^{(12)}_{k,t}=\sqrt{\frac{12}{11}\widehat{\mathrm{Var}}_{12}(r_{k,\cdot})}$. Here $k\in\{0,1,2,3,4,5\}=\{\mathrm{Mkt},\mathrm{SMB},\mathrm{HML},\mathrm{CMA},\mathrm{RMW},\mathrm{UMD}\}$, and this table uses Val\_0, Val\_1, Val\_4, Val\_5; Rev\_4, Rev\_5; Vol\_0, Vol\_2. Acronyms follow those in the Open Source Asset Pricing Dataset.
\end{minipage}
\endgroup

\end{document}